\newcommand{\name}[0]{Gus}
\newtheorem{theorem}{Theorem}
\newtheorem{lemma}{Lemma}
\newtheorem{definition}{Definition}
\newcommand{\remove}[1]{}
\begin{document}
	
	\title{Distributed Multi-writer Multi-reader Atomic Register with Optimistically Fast Read and Write}
	
	\author{Lewis Tseng}
	\email{lewis.tseng@bc.edu}
	\affiliation{%
		\institution{Boston College}
		\city{Boston}
		\state{MA}
		\country{USA}
	}
	
	\author{Neo Zhou}
	\email{zhouaea@bc.edu}
	\affiliation{%
		\institution{Boston College}
		\city{Boston}
		\state{MA}
		\country{USA}
	}
	
	\author{Cole Dumas}
	\email{dumasca@bc.edu}
	\affiliation{%
		\institution{Boston College}
		\city{Boston}
		\state{MA}
		\country{USA}
	}
	
	\author{Tigran Bantikyan}
	\email{tigranbantikyan@u.northwestern.edu}
	\affiliation{%
		\institution{Northwestern}
		\city{Evanston}
		\state{IL}
		\country{USA}}
	
	\author{Roberto Palmieri}
	\email{palmieri@lehigh.edu}
	\affiliation{%
		\institution{Lehigh University}
		\city{Bethlehem}
		\state{PA}
		\country{USA}}
	
	\renewcommand{\shortauthors}{Tseng et al.}
	
	\begin{abstract}
		A distributed multi-writer multi-reader (MWMR) atomic register is an important primitive that enables a wide range of distributed algorithms. Hence, improving its performance can have large-scale consequences. Since the seminal work of ABD emulation in the message-passing networks [JACM '95], many researchers study fast implementations of atomic registers under various conditions. ``Fast'' means that a read or a write can be completed with 1 round-trip time (RTT), by contacting a simple majority.  
		In this work, we explore an atomic register with optimal resilience and ``\textit{optimistically fast}'' read and write operations. That is, both operations can be fast if there is \textit{no} concurrent write. 
		
		This paper has three contributions: (i) We present \name, the emulation of an MWMR atomic register with optimal resilience and optimistically fast reads and writes when there are up to 5 nodes; (ii) We show that when there are $> 5$ nodes, it is \textit{impossible} to emulate an MWMR atomic register with  both properties; and (iii) We implement \name{} in the framework of EPaxos and Gryff, and show that \name{} provides lower tail latency than state-of-the-art systems such as EPaxos, Gryff, Giza, and \textsc{Tempo} under various workloads in the context of geo-replicated object storage systems.
	\end{abstract}
	
	\ccsdesc[500]{Theory of computation~Distributed algorithms}
	
	
	
	\keywords{Register, Atomicity, Evaluation, Impossibility}
	
	
	\maketitle
	
	\section{Introduction}
\label{s:intro}

Attiya, Bar-Noy, Dolev \cite{AttiyaBD1995}  present an emulation algorithm, namely ABD, that implements an atomic single-writer multi-reader register with optimal resilience in asynchronous message-passing networks when nodes may crash. ABD allows porting many known shared-memory algorithms to message-passing networks, such as multi-writer multi-reader (MWMR) registers, atomic snapshots, approximate consensus and randomized consensus. 

The MWMR version of ABD \cite{GeoQuorum_Lynch1997} requires 2 RTT to complete a write operation, and 1 RTT to complete a read when there is no concurrent write. Subsequent works identify conditions so that reads \cite{ABD_read_Rachid_PODC2004,ABD_Analysis_PODC2020} and writes \cite{ABD_write_OPODIS2009} can be fast. An operation is fast if it can \underline{always} be completed in 1 round-trip time (RTT), by contacting a simple majority of nodes.  Unfortunately, the conditions for fast writes are not generally applicable to practical systems as will be discussed in Section \ref{s:related}. 

Dutta et al. [PODC '04] prove that  implementing an atomic register with \textit{both} fast writes and fast reads is impossible \cite{ABD_read_Rachid_PODC2004}. Recently, Huang et al. [PODC '20] identify more constraints in implementing fast writes or fast reads \cite{ABD_Analysis_PODC2020}. Motivated by these results, we ask: ``\textit{Can we do better for practical systems?}''

\vspace{3pt}
\noindent\textbf{Motivation.}~
Observe that  object storage systems can be modeled as atomic registers. For real-world object storages, the typical workloads have two key characteristics \cite{Giza_ATC2017,IBM-Cloud-Object_PDSW2015,IBM-Cloud-Storage_HotStorage2020}: (i) \textit{Concurrency is rare, but possible}: in Microsoft OneDrive, only 0.5\% of the writes occur within a 1 second interval; and (ii) \textit{Object size and operation vary widely}: IBM Cloud Object Storage supports hosting service of web page, game, video and enterprise backups. In  their testing benchmark \cite{IBM-Cloud-Object_PDSW2015}, object size varies from 1 KB to 128 MB, and the ratio of write operations range from 5\% to 90\%. 

These observations indicate that it is important to design an algorithm that handles various workloads efficiently, for practical object storages. To optimize for the common case, we are interested in ``\textit{optimistically fast}'' operations, i.e., operations that are fast, when there is \underline{no concurrent write}. The MWMR version of ABD \cite{GeoQuorum_Lynch1997} achieves optimistically fast reads, but not writes. Concretely, we answer the following question in this paper:

\begin{tcolorbox}
\centering
Is it possible to implement an atomic register that supports both \textit{optimistically fast} reads and writes?

\end{tcolorbox}

\vspace{3pt}
\noindent\textbf{Contribution: Theory.}~
On the positive side, we present \name, which implements an MWMR atomic register with optimal resilience and optimistically fast read and write operations when there are up to 5 nodes. To achieve optimistically fast operations, \name{} combines two novel techniques: (i) \textit{speculative timestamp}: a node optimistically uses locally known logical timestamp to enable 1-RTT fast path for writes  (i.e., when writes commit in a single communication step), and (ii) \textit{view exchange}: nodes exchange newly received timestamps to enable 1-RTT fast path for reads. 

Considering that most production storage systems deploy 3- or 5-way geo-replication \cite{GoogleFileSystem_SOSP2003,Spanner_OSDI12,Pileus_SOSP2013}, we believe \name{} is useful for practical settings, given its performance benefits. Furthermore, to address the scalability issue, we
propose two solutions with  different trade-offs between latency (in terms of RTTs) and resilience.

Furthermore, we formally prove that scalability is fundamentally limited. We show that when there are $> 5$ nodes, it is \textit{impossible} to emulate an optimally resilient atomic register that supports optimistically fast reads and writes. This impossibility implies that \name{} is optimal with this aspect.

\vspace{3pt}
\noindent\textbf{Contribution: Systems and Experiments.}~
We experimentally evaluate how the property of optimistically fast operations behave in object storages, as it is difficult to quantify how concurrent operations affect the performance in theory.  Practical systems often use a consensus-based approach to implement an object storage. Hence, we compare \name{} with state-of-art consensus-based systems EPaxos \cite{EPaxos_SOSP13}, Gryff \cite{Gryff_Lloyd_NSDI2020},Giza \cite{Giza_ATC2017}, and \textsc{Tempo} \cite{Tempo_Alexey_Eurosys21}.

We implemented \name{} in the framework of EPaxos \cite{EPaxos_SOSP13} and Gryff \cite{Gryff_Lloyd_NSDI2020} to make a fair comparison. Furthermore, in the same framework, we implemented our version of Giza \cite{Giza_ATC2017} (source code not available). \name{} outperforms these competitors in both throughput and latency, which demonstrates  practical performance benefits under a wide range of workloads. Under various settings with three nodes, \name{} has better tail latency than both Gryff and EPaxos. Compared to Gryff, 5\%-18\% of \name's reads are faster, and $\geq$95\% of writes improve latency by up to 50\%. \name{} also has 0.5x-4.5x maximum throughput than both Gryff and EPaxos in the case of write-intensive geo-replication workload. With 9 nodes, \name{}'s tail latency for reads has 12.5\% improvement over \textsc{Tempo}'s \cite{Tempo_Alexey_Eurosys21}.

\section{Preliminaries and Related Work}

\subsection{System Model}
We consider an asynchronous message-passing network consisting of $n$ nodes, where $n \leq 5$. Section \ref{s:optimization-scalability} presents solutions to scale \name{} beyond $5$ nodes with different trade-offs. At most $f$ of the nodes may crash at any point of time. \name{} ensures safety and liveness as long as $n\geq 2f+1$. Messages could be arbitrarily delayed, but messages between any pair of fault-free nodes are delivered eventually. 

Following the convention of the literature \cite{Lynch96,attiya2004distributed,AttiyaBD1995}, we assume that each node has client threads (reader thread or writer thread) and a server thread. In practical systems, this model captures co-located clients -- a client $C$ is co-located with a sever $R$ if the message delivery latency between $C$ and $R$ is much less than the minimum latency between $C$ and other servers. Clients running the applications (e.g., web hosting or backup service) can be considered co-located with a server in the same data center.

\vspace{3pt}
\noindent\textbf{Linearizability.}~\name{} achieves linearizability \cite{herlihy1990linearizability}. That is,  there exists a total ordering of operations $O$ such that (i) operations appear to execute in the order of $O$; (ii) $O$ is consistent with register semantics, i.e., any read must return the value of the most recent write in $O$; and (iii) $O$ satisfies the real-time ordering between operations, i.e., if operation $o_a$ completes before the invocation of operation $o_b$, then $o_b$ should appear after $o_a$ in $O$.

\subsection{Related Work}
\label{s:related}

This section discusses the closely related theory works. We defer the comparison between \name{} and relevant practical systems to Section \ref{s:evaluation}. These systems (e.g., \cite{EPaxos_SOSP13,Gryff_Lloyd_NSDI2020,Giza_ATC2017,Tempo_Alexey_Eurosys21}) are based on some form of consensus and provide liveness only in partially asynchronous networks, whereas \name{} uses quorum and ensures both safety and liveness in asynchronous networks. 

The ABD algorithm by Attiya, Bar-Noy, Dolev \cite{AttiyaBD1995} is the first implementation of atomic single-writer multi-reader (SWMR) register in asynchronous networks with $n \geq 2f+1$. ABD requires 1 RTT for writes and 2 RTT for reads. Lynch and Shvartsman \cite{GeoQuorum_Lynch1997} later extend the algorithm to the multi-writer multi-reader version, which takes 2 RTT for writes and 2 RTT for reads. These two versions of ABD support a simple optimization to make reads \textit{optimistically fast}, i.e., 1 RTT reads when there is \textit{no} concurrent write. 

Subsequent works \cite{ABD_read_Rachid_PODC2004,ABD_Analysis_PODC2020,ABD_write_OPODIS2009} study fast operations which complete in 1 RTT. The algorithm in \cite{ABD_write_OPODIS2009} supports fast writes only when there are at most $n-1$ writer clients. In practical geo-replication with $n$ data centers, this condition implies that one data center cannot serve any writes. 
The algorithms in \cite{ABD_read_Rachid_PODC2004,ABD_Analysis_PODC2020} support fast reads, but require $n = O(fn_R)$, where $n_R$ is the number of readers. 

Prior works identify several impossibilities. Dutta et al. \cite{ABD_read_Rachid_PODC2004} show that in general it is impossible to have both fast writes and reads, when a single node may crash. Englert et al. \cite{ABD_write_OPODIS2009} prove that to support fast writes, the number of writes cannot be more than $n-1$ (which implies that their algorithm is optimal in this aspect). Huang et al. \cite{ABD_Analysis_PODC2020} derive two more impossibilities: (i) fast write is impossible if reads need to be completed in 2 RTT; and (ii) to have fast reads and 2-RTT writes, $\Omega(fn_R)$ is the lower bound on $n$. 

Several works study other variations of the properties, e.g., semi-fast operations \cite{Chryssis_JPDC_atomicity09,Tseng_ICDCS20_BSR}, fast operations for Byzantine-tolerant SWMR registers \cite{Rachid_RQS_DC10}, weak semi-fast operations \cite{Chryssis_atomicity_DISC08}, and fast operations for regular and safe registers \cite{Rachid_fast_regular_PODC06,AbrahamCKM2006}. To the best of our knowledge, no prior work studies the feasibility of atomic registers with \textit{optimistically fast} operations. Furthermore, our idea of speculative timestamp is new, which would be useful for future works that aim to achieve optimistically fast operations.

\vspace{3pt}
\noindent\textbf{ABD Register \cite{AttiyaBD1995,GeoQuorum_Lynch1997}.}~
Most works on atomic registers in message-passing networks are inspired by ABD, including \name{}. Hence, we briefly describe ABD before presenting our design. We describe ABD and \name{} for a single register.  Recall that linearizability is a local (or composable) property \cite{herlihy1990linearizability}, i.e.,  the property holds for a set of objects, \textit{if and only if} it holds for each individual object. Therefore, it is straightforward to compose instances of these protocols to obtain a linearizable system that supports multiple registers. 

ABD associates a unique \textit{tag} with a write and its value. Writes and values are ordered lexicographically by their tags. Formally, a tag is a tuple, $(ts, id)$, consisting of two fields: (i) a logical timestamp representing the (logical) time for the write; and (ii) the writer ID representing the writer client's identifier that invokes the write. For tag $t$, we use ``$t.ts$'' to denote the timestamp field, and ``$t.id$'' to denote the writer ID field. Two tags can be compared as follows: 

\begin{definition} Tag $t_1$ is greater than tag $t_2$ if (i) $t_1.ts > t_2.ts$; or (ii) $t_1.ts = t_2.ts$ and $t_1.id > t_2.id$.

Tag $t_1$ is equal to $t_2$ if $t_1.ts = t_2.ts$ and $t_1.id = t_2.id$.
\end{definition}

Each node stores a value $v$ and an associated tag $t$. ABD register requires two phases for both reads and writes. A read begins with the reader client obtaining the current tag and value from a quorum. The quorum is any simple majority of nodes. The reader then chooses the value associated with the maximum tag and propagates this maximum tag and value to all the nodes. Upon the acknowledgments from a quorum, the read is complete. The second phase, namely the ``\textit{write-back}'' phase, can be omitted if all the tags from the first phase are identical,  achieving \textit{optimistically fast} reads.

A writer client $w$ follows a similar two-phase procedure. It first obtains the maximum tag $t_{max}$ from a quorum, and then  constructs a new tag $t = (t_{max}.ts+1, w)$. In the second phase, client $w$ propagates $t$ and its value to all nodes and waits for acknowledgments from a quorum. Since a writer needs to contact a quorum to obtain tag $t$ (\textit{writer-reads} design), ABD and relevant protocols \cite{ABD_Analysis_PODC2020,ABD_read_Rachid_PODC2004} require 2 RTT for the write operations, even if there is no concurrent operation. Our technique ``speculative timestamp'' and the focus on only $3$ or $5$ nodes allow us to skip this step optimistically.

	\newcommand{\request}[0]{\textsc{Request}}
\newcommand{\view}[0]{View}
\newcommand{\storage}[0]{Storage}
\newcommand{\tmpStorage}[0]{TmpStorage}
\newcommand{\currentTime}[0]{ts^{max}}
\newcommand{\currentTag}[0]{tag}
\newcommand{\currentValue}[0]{value}
\newcommand{\ifStaleTag}[0]{isStale}

\newcommand{\writeData}[0]{\textsc{write}}
\newcommand{\commitWrite}[0]{\textsc{commit-write}}
\newcommand{\ackWrite}[0]{\textsc{ack-write}}
\newcommand{\updateView}[0]{\textsc{update-view}}
\newcommand{\ackCommit}[0]{\textsc{ack-commit}}
\newcommand{\readData}[0]{\textsc{read}}
\newcommand{\ackRead}[0]{\textsc{ack-read}}

\section{\name{}: Design}
\label{s:design}

\subsection{Architecture and  Protocol}
\label{s:protocol}
\name{} borrows tag and lexicographical ordering from ABD. A key challenge is to determine a tag for each write. Later we will show that even with a speculative timestamp, each write and its value still obtain a unique tag. As a result, we will often refer to a tag as the ``version'' of the register value. 

Recall that each node has a writer, a reader and a server.\footnote{Nodes can support multiple writers and readers using proxies.} Writers and readers communicate with server threads at other nodes. For brevity, we will simply say writer/reader communicate with nodes. Readers exchange $\langle \readData{} \rangle$ and $\langle\ackRead{}\rangle$ messages, and writers exchange $\langle \writeData\rangle$, $\langle\ackWrite\rangle$, and $\langle\commitWrite\rangle$ messages. Background handlers of the server implement a set of event-driven functions that exchange $\langle\updateView{}\rangle$ messages with other nodes and update local variables.

\vspace{3pt}
\noindent\textbf{Node States.}~~
Each node $R_i$ maintains three states: 
\begin{itemize}
    \item $\storage_i$ is a set of tuples $(tag, value)$, which stores all the versions of the register value, where each version has a unique $tag$;

    \item $tag_i$ represents the largest known tag associated with the value in $\storage_i$; and

    \item $View_i$ is a vector that keeps track of each node's view. View of a node $R_i$ is defined as a set of tags that $R_i$ has known so far. By design, $View_i[i] $ contains the tags associated to all the values in $\storage_i$. Condition \textsc{SafeToReturn} presented later in Definition \ref{def:safetoread} shows how \name{} uses $View_i$ to decide which version of the register value is safe to return, with respect to linearizability.
\end{itemize}
We assume any thread on the node can access these states. This assumption is typical in many practical systems, as clients are handled by client proxies that run on each node. 

    
            
            
    
            
            
            

\vspace{3pt}
\noindent\textbf{Techniques and Challenges.}~
\name{} has two novel techniques: 

\begin{itemize}
    \item \textit{Speculative timestamp}: Writer opportunistically uses the local tag $tag_i$ 
    as the tag for the value it intends to write.
    
    
    \item \textit{View exchange}: Each node propagates to all the other nodes whenever it has learned a new tag. Each node $i$ uses $View_i$ to keep track of this information. 
\end{itemize}

\textit{Speculative timestamp} allows \name{} to achieve 1 RTT when there is no concurrent write, and enters the second phase only when observing a concurrent write. \textit{View exchange} allows nodes to collect up-to-date information and to enable 1-RTT read when there is no concurrent write. 
In terms of protocol design, we need to address the following two technical challenges:

\begin{itemize}
    \item No read can return stale value, even if the speculative timestamp is stale. A writer can observe a stale timestamp if the node that the writer is co-located with has \textit{not} received the most recent writes from other nodes.
    
    \item No write operation can be associated with two tags. Essentially, the ordering of the operations is constructed using the tags; hence, if a write can be associated with multiple tags, the total ordering could be violated. We will formally define what ``associated tags'' mean after presenting the protocol. 
\end{itemize}

\begin{algorithm*}[!hptb]
\begin{algorithmic}[1]
    
\vspace{-10pt}
\begin{multicols}{2}
\footnotesize
\item[{\bf Write($value$) invoked by Writer} $i$:]
\STATE // \textit{Put phase: propagate data with speculative timestamp}
\STATE $\currentTime_i \gets \currentTag_i.ts + 1$
\STATE \textbf{send} $\langle \writeData, (\currentTime_i, i), value \rangle$ \textbf{to all}
\STATE \textbf{wait until} receiving $\langle \ackWrite, tag_j \rangle$ \textbf{from a quorum of nodes} $Q$

\vspace{1pt}

\STATE // \textit{Update phase: update tag if necessary}
\IF{$(\currentTime_i, i) > tag_j$ for all $j \in Q$}
    \STATE commit write  \COMMENT{fast-path for write}
\ELSE
    \STATE $\currentTime_i \gets \max_{j \in Q}~ tag_j.ts + 1$
    \STATE \textbf{send} $\langle \commitWrite, (\currentTime_i, i)  \rangle$ \textbf{to all}
    \STATE \textbf{wait until} receiving $\langle \ackCommit, (\currentTime_i, i) \rangle$ from a quorum
\ENDIF
\STATE $\currentTag_i.time \gets \currentTime_i$;~~~~~$\currentTag_i.id \gets i$
\STATE $\storage_i \gets \storage_i \cup \{(\currentTag_i, value)\}$  
\STATE $\view_i[i] \gets \view_i[i] \cup \{(\currentTag_i, i)\}$
\STATE commit write, if not have already done so \COMMENT{slow-path for write}

\vspace{3pt}

\item[{\bf Read() invoked by Reader} $i$:]
\STATE \textbf{send} $\langle \readData, i \rangle$ \textbf{to all}
\STATE \textbf{wait until} receiving $\langle \ackRead, tag_j \rangle$ \textbf{from a quorum of nodes} $Q$
\STATE $tag^{max} \gets$ largest $tag_j$ received from all $j \in Q$
\STATE \textbf{wait until} Condition \textsc{SafeToRead} holds on $\view_i, tag^{max}, value$


\STATE return $value$


\vspace{3pt}

\item[\textbf{Background Handlers at Node} $R_i$:]
\STATE \textbf{Upon receiving} $(\writeData, tag_j, value)$ \textbf{from writer} $j$:
\IF{$\currentTag_i < tag_j$}
    \STATE $\storage_i \gets \storage_i \cup \{(tag_j, value)\}$ 
    \STATE $\currentTag_i \gets tag_j$
    \STATE $\view_i[j] \gets \view_i[j] \cup \{tag_j\}$; ~~~~~$\view_i[i] \gets \view_i[i] \cup \{tag_j\}$
    \STATE \textbf{send} $\langle \updateView, tag_j\rangle$ \textbf{to all nodes}
\ELSE
    \STATE $\tmpStorage_i \gets \tmpStorage_i \cup \{(tag_j, value)\}$

\ENDIF
\STATE \textbf{send} $\langle \ackWrite, \currentTag_i \rangle$ \textbf{to writer} $j$

\vspace{3pt}

\STATE \textbf{Upon receiving }$(\commitWrite, tag_j)$ \textbf{from writer} $j$:

\IF{$j$'s write is in $\tmpStorage_i$}
    \STATE $value \gets $ value associated with $j$'s most recent write in $\tmpStorage_i$
    \STATE $\storage_i \gets \storage_i \cup \{(tag_j, value)\}$   
\ELSE
    \STATE Update $\storage_i$ to ensure $j$'s write has tag $tag_j$
\ENDIF
\IF{$tag_i < tag_j$}
    \STATE $\currentTag_i \gets tag_j$
\ENDIF
\STATE $\view_i[j] \gets \view_i[j] \cup \{tag_j\}$; ~~~~~$\view_i[i] \gets \view_i[i] \cup \{tag_j\}$
\STATE \textbf{send} $\langle \updateView, tag_j \rangle$ \textbf{to all nodes}
\STATE \textbf{send} $\langle \ackCommit, tag_j \rangle$ \textbf{to writer} $j$

\vspace{3pt}

\STATE \textbf{Upon receiving} $(\readData, j)$ \textbf{from reader} $j$:
\STATE \textbf{send} $\langle \ackRead, \currentTag_i \rangle$ \textbf{to reader} $j$

\vspace{3pt}

\STATE \textbf{Upon receiving} $(\updateView, (ts, k))$ \textbf{from node} $j$:
\STATE $\view_i[j] \gets \view_i[j] \cup \{(ts, k)\}$

\end{multicols}
\vspace{-10pt}
\end{algorithmic}
\caption{\name{} for $n=3$}
\label{algo:Gus}
\end{algorithm*}

\noindent\textbf{Protocol Specification.}~
Algorithm \ref{algo:Gus} specifies the steps that need to be followed by each node when $n=3$. We defer the discussion of extension of $n=4$ or $5$ to Section \ref{s:n=5}. 

\noindent\textbf{Write operation}: 
Writer $i$, which is co-located with node $R_i$, obtains tag $(ts_i^{max}, i)$ by adding $1$ to the timestamp of the largest tag known to node $R_i$ (Line 2). It then propagates the value along with this new tag to all the nodes and waits until receiving an acknowledgement from a quorum of nodes $Q$ (Line 4). A quorum used in \name{} is always a \textit{simple majority}.

\textit{Fast Path}:~~Writer can then detect whether there is a concurrent write by comparing $(\currentTime_i, i)$ with the tag received from $Q$ (Line 6). If there is no concurrent write operation, then $i$'s write is on the fast path (Line 7). Client is notified that the write is completed at this point. The writer proceeds to asynchronous bookkeeping steps, including updating tag (Line 12),  storage (Line 13), and view (Line 14). All these steps can be done asynchronously, because after Line 7, it is guaranteed that enough nodes have already obtained the value with the correct tag. 

\textit{Slow Path}:~~Only if the writer detects a concurrent write, it needs to obtain and update the correct logical timestamp. It first
constructs the logical timestamp by finding the largest timestamp field in the received tags from $Q$ and increasing it by 1 (Line 9). The writer then sends the commit message $\langle \commitWrite{} \rangle$ to all the nodes to update the tag, and waits for acknowledgement from a quorum on the slow path. This is necessary to ensure that enough nodes have received the correct and updated tag. Note that $\langle \commitWrite{} \rangle$ message does \textit{not} include the value field to save network bandwidth.

\textit{Background Handler for Writes}:~~The server thread has event-driven handlers that run in the background to process incoming messages. Upon receiving $\langle \writeData{} \rangle$ message from writer $j$, node $R_i$ first checks the tag $(ts, j)$. If it is larger than $\currentTag_i$, then node $R_i$ stores the value (Line 23), updates tag (Line 24) and view (Line 25), and notifies others that it has learned a new value (Line 26). Finally, $R_i$ replies $j$ with the acknowledgement (Line 29). If $tag_i$ is larger, this means that writer $j$'s tag may be stale, and $j$ needs to update its speculative timestamp later. Hence, node $R_i$ puts the value at a temporary storage $\tmpStorage_i$ (Line 28) and replies $j$ (Line 29). 

Upon receiving $\langle \commitWrite{} \rangle$ message from writer $j$, node $R_i$ moves the value from $\tmpStorage_i$ to $\storage_i$ (Line 32, 33) if the write has been put in $\tmpStorage_i$ before. Otherwise, $R_i$ updates $\storage_i$ to make sure that $j$'s write has the correct tag. The tag in $\storage_i$ could be stale if both $R_i$ and $j$ have not observed a previously completed write operation. Next, $R_i$ proceeds with the steps similar to the previous handler: updates tag (Line 36, 37) and view (Line 38), and notifies others that it has learned a new value (Line 39). Finally, $R_i$ sends acknowledgement to $j$ (Line 40).

\vspace{3pt}
\noindent\textbf{Technical Challenge 1}:~Due to asynchrony and failure, it is possible for a write to have a \textit{stale speculative timestamp}. Consider the example in Figure \ref{fig:speculative-ts}, node $R_3$ has not observed the most recent write $write_1$; hence, its timestamp is still $1$. Then, $write_2$, invoked by a writer co-located with $R_3$, has a stale tag because its speculative timestamp is less than the one included in a \textit{completed} write operation $write_1$. Recall that 
to satisfy linearizability, a read that occurs after $write_2$ has to return the value of $write_2$, instead of $write_1$. 

\name{} achieves this by introducing the second phase to identify and update the correct timestamp, which equals to $3$ in this example. After $write_1$ completes, $R_1$ and $R_2$ have timestamp $2$; hence, after the first phase, $write_2$ learns the most recent timestamp from either node, and updates the correct tag in the second phase.

\noindent\textbf{Read operation}:~ 
In \name{}, a reader can retrieve value from its co-located node. The only task is to figure out the value associated with the most recent tag, i.e., the version of the value that satisfies both real-time and total ordering constraints. \name{} achieves this by first contacting a quorum of nodes to learn their most recent tag $tag^{max}$ (Line 16 -- 18), and using  Condition \textsc{SafeToRead}, as per Definition \ref{def:safetoread}, to obtain the return value (Line 19, 20).

\textit{Background Handler for Reads}:~~Upon receiving $\langle \readData \rangle$ message, node $R_i$ returns its tag $tag_i$ (Line 42), which is the largest known tag at $R_i$. Upon receiving $\langle\updateView\rangle$ message, node $R_i$ updates the corresponding entry in $View_i$ (Line 44). By definition of $View_i$, adding $(ts, k)$ to $View_i[j]$ means that node $R_i$ learns that node $R_j$ has added writer $k$'s value associated with $ts$ to $\storage_j$. Owing to the usage of speculative timestamp, it is possible that $View_i[j]$ has both $(ts, k)$ and $(ts', k)$ for the same $k$'s write operation where $ts \neq ts'$. However, this does not affect the correctness, as we explain next how \name{} addresses technical challenge 2.

\begin{figure}[t]
    \includegraphics[width=\linewidth]{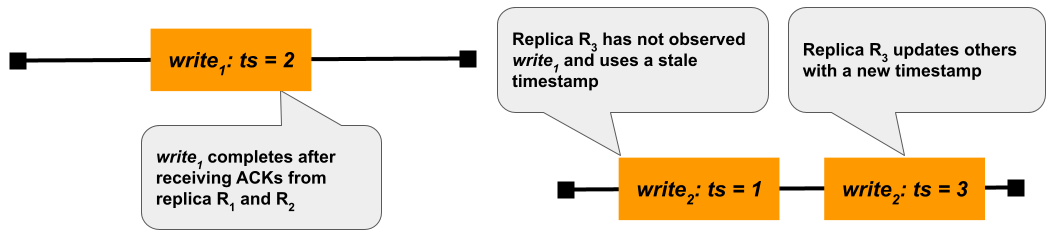}
	\vspace{-20pt}
	\caption{\textit{Speculative Timestamp with} $n=3$. Two ends denote the invocation and the completion time of each operation, respectively. Orange box denotes the timestamp (\textit{ts}) field of each write's tag. $write_2$'s timestamp was stale initially.}
	\label{fig:speculative-ts}
    \vspace{-15pt}
\end{figure}

\begin{definition}
\label{def:safetoread}
Condition \textsc{SafeToRead} is said to hold on $\view_i,~tax^{max}$ and $value$ if there exists a $(tag, value)$ in $\storage_i$ where $tag \geq tag^{max}$, and there is a quorum $Q_R$ s.t. (i) $i \in Q_R$; and (ii) for each $j \in Q_R, (tag, j) \in \view_i[j]$.

\end{definition}

Intuitively, Condition \textsc{SafeToRead} finds a $(tag, value)$ pair in $\storage_i$ whose $tag$ is larger than $\currentTag^{max}$ and $value$ is received by a quorum of nodes $Q_R$, including $R_i$. In other words, the condition ensures that the returned value is received by a read quorum $Q_R$, and its version $tag$ is at least as recent as $\currentTag^{max}$. 

Figure \ref{fig:read} presents the fast and slow paths for reads. If $tag^{max} = tag_1$, then the condition must hold at that moment. If there is a concurrent write (with a larger tag), then $tag^{max} > tag_1$. Thus, the reader at $R_1$ needs to wait for more messages -- $\langle\writeData\rangle$ from $R_5$ and $\langle\updateView\rangle$ from two other nodes --  to satisfy Condition \textsc{SafeToRead}. In the worst case, this takes 2 RTT.

\noindent\textbf{Technical Challenge 2}:~With speculative timestamp, a write may have two tags (or timestamps). We say that a write is ``\textit{associated}'' with a tag $tag$ (or timestamp $ts$) if a read returns the value of a write with $tag$ (or $ts$). In the example of Figure \ref{fig:speculative-ts}, we do not want $write_2$ to be associated with timestamp 1, i.e., no read should return the value of $write_2$ with timestamp $ 1$. This is because that eventually $write_2$ will update its timestamp to $3$, which means $write_2$ will be associated with two tags. Consequently, it is impossible to find a total ordering using associated tags that satisfies linearizability. 

Condition \textsc{SafeToRead} is devised so that such an undesirable scenario can never occur. 
In \name{}, a read returns $value$ if a read quorum $Q_R$ has received $(tag, value)$. In the aforementioned example, no read can return a value associated with timestamp 1 because $R_1$ and $R_2$ observe $ts=1$ being stale, and $R_3$ updates $ts$ \textit{only after} $write_2$ is completed; hence, it is not possible to gather a read quorum. When $n=3$, if a write observes a stale tag $tag^{old}$, then no read can return a value with $tag^{old}$. This is because at most one other node would consider $tag^{old}$ as the most recent tag, which means that no read can obtain $tag^{old}$ from a read quorum $Q_R$. 

\subsection{Correctness and Performance Analysis}

We follow the proof structure in \cite{GeoQuorum_Lynch1997,AttiyaBD1995}, i.e., using tags to assign the order of the operations. The key difference is to prove that \name{} addresses Technical Challenge 2 correctly -- each write can only be associated with one tag. We prove the claim by formalizing the argument in Section \ref{s:protocol}. The complete proof is presented in Appendix \ref{app:proof}. 

\name{} achieves optimistically fast operations, i.e., both writes and reads take 1 RTT if there is no concurrent write.  Both operations take 2 RTT in the worst case, as shown in Figure~\ref{fig:read}. Message complexity for reads is the same as prior algorithms \cite{GeoQuorum_Lynch1997,AttiyaBD1995}, $O(n)$. For reads, we only count the messages on the fast path, since as shown in Figure \ref{fig:read}, other messages for committing reads belong to writes. For writes, the message complexity is $O(n^2)$ due to $\langle \updateView \rangle$. 
Despite higher complexity, we find this acceptable in our target case because this design allows for using the fast path for reads. Moreover, for the case of object storage systems, $\langle \updateView \rangle$ only contains tag, not the data itself. Since typical data size is in the range of KBs, MBs or even more  \cite{Giza_ATC2017,IBM-Cloud-Object_PDSW2015,IBM-Cloud-Storage_HotStorage2020}, the bit complexity and network bandwidth consumption of the overhead are negligible.

\subsection{The Case of $n=4$ or $5$}
\label{s:n=5}

Algorithm \ref{algo:Gus} does \textit{not} work with $n>3$ owing to \textit{Technical Challenge 2} -- a write could be associated with two tags when $n>3$. Consider the example in Figure \ref{fig:write-read-example}. Suppose $write_1$ is from a writer $W_1$ at $R_1$ and $write_2$ is from writer $W_2$ at $R_2$. Writer $W_1$ learns from $R_2$ that its speculative timestamp is stale due to the concurrent $write_2$. In the meantime, $R_3, R_4$, and $R_5$ have not observed $write_2$ and form a read quorum which allows a reader to read $write_1$ with a stale timestamp. After $W_1$ updates a new timestamp due to the notification from $R_2$, $write_1$ is associated with two tags.

To address this issue, more information needs to be included in $\langle \ackWrite \rangle$ message -- if the highest tag is from $R_j$, then $R_j$ needs to indicate whether a write is completed or not. 
In the earlier example, the second phase is not needed. Since $write_2$ has not completed yet (i.e., $W_2$ has not received a confirmation from a quorum), the writer $W_1$ does \textit{not} need to update the tag, and can complete its write on the fast path. This does not violate linearizability, since by definition, two concurrent writes can appear in any order. 



\section{Impossibility}
\label{s:impossible}

\begin{figure}[t]
    \includegraphics[width=\linewidth]{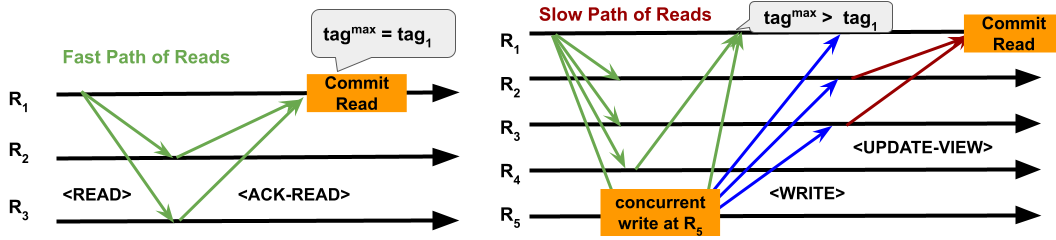}
	\vspace{-20pt}
	\caption{Fast/Slow Path of Reads. Green arrows represent $\langle\readData\rangle$ and $\langle\ackRead\rangle$, blue represent $\langle\writeData\rangle$, and red represent $\langle\updateView\rangle$. (On the right figure, not all messages are shown for brevity.) }
	\label{fig:read}
\end{figure}

\begin{figure}[t]
    \includegraphics[width=\linewidth]{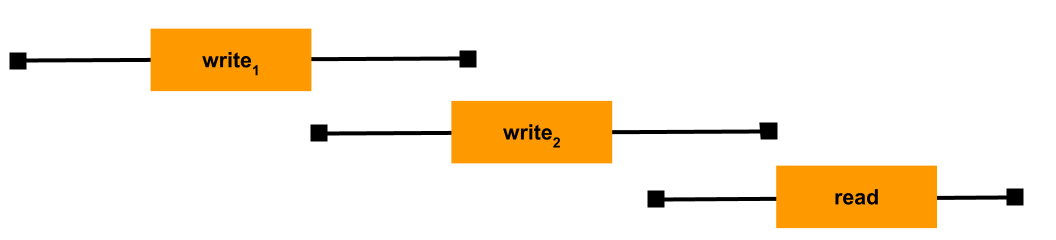}
	\vspace{-25pt}
	\caption{Example Execution. 
	 $write_2$ is concurrent with the other two operations.}
	\label{fig:write-read-example}
\end{figure}


\begin{theorem}
For $n > 5$ and $n = 2f+1$, it's impossible to have an atomic register that supports optimistically fast writes and reads. 
\end{theorem}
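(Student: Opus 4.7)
The plan is to prove the impossibility by an indistinguishability (adversary scheduling) argument in the spirit of \cite{ABD_read_Rachid_PODC2004,ABD_write_OPODIS2009,ABD_Analysis_PODC2020}, but specialized to the weaker optimistically-fast regime. Assume for contradiction an algorithm $\mathcal{A}$ on $n = 2f+1 \geq 7$ nodes tolerating $f$ crashes and supporting both optimistically fast reads and writes. It suffices to handle $n = 7$, $f = 3$: any algorithm for larger $n$ can be simulated on seven nodes by silencing the extras. Fix two writers $W_1, W_2$ at nodes $R_1, R_2$, a reader $R^*$ at some other node $R_r$, and two majority quorums $Q_1 \ni R_1$, $Q_2 \ni R_2$ of size $f+1 = 4$, chosen so that $|Q_1 \cap Q_2|$ is as small as possible (the paper's Section~\ref{s:n=5} style ``pivot'' configuration).

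I would then construct three executions. In $\alpha_1$, only $W_1$ writes $v_1$; the adversary delivers $W_1$'s messages promptly to $Q_1$ and delays everything else, so by optimistic fastness $W_1$ commits after one RTT from $Q_1$'s acknowledgments. Symmetrically define $\alpha_2$ with $W_2$, $v_2$, $Q_2$. The composite $\alpha_3$ runs both writes near-simultaneously under a schedule that makes (i) $W_1$'s entire local view from $Q_1$ byte-identical to its view in $\alpha_1$ (by delaying $W_2$'s messages to every node of $Q_1$ until after $W_1$ commits), (ii) $W_2$'s view from $Q_2$ byte-identical to $\alpha_2$, and (iii) no cross-messages between $Q_1 \setminus Q_2$ and $Q_2 \setminus Q_1$ delivered before both writes commit. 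Determinism on local state then forces both writers to commit on the fast path in $\alpha_3$ as well. Finally I would invoke $R^*$ and orchestrate the $\langle \readData \rangle$ replies together with $R_r$'s co-located storage and view so that $R^*$'s local snapshot in $\alpha_3$ is indistinguishable from its snapshot in $\alpha_1$, and by a symmetric variant of $\alpha_3$ also from its snapshot in $\alpha_2$. Because both writes have locally committed and the read sees no concurrent activity, optimistic fastness forces $R^*$ to return in one RTT, so by indistinguishability its output must equal the output in $\alpha_1$ (namely $v_1$) and in $\alpha_2$ (namely $v_2$)---a contradiction.

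The main obstacle is this last step: constructing $R^*$'s simultaneous indistinguishability in $\alpha_3$ even though both writes' messages are circulating. For $n \leq 5$, Section~\ref{s:n=5} shows no such schedule exists, since appending a single ``write-completed'' bit to acknowledgments lets the reader's quorum pin down which of $v_1, v_2$ has actually committed; uniqueness of the non-pivot uncommitted witness is what saves linearizability. For $n \geq 7$ the adversary gains strictly more slack---at least two nodes outside the pivotal intersection $Q_1 \cap Q_2$ can be kept mutually uninformed across $\alpha_1, \alpha_2, \alpha_3$---and I would formalize this via a covering argument over all size-$4$ read quorums in that complement, showing that no constant-size annotation (on acks, on $\langle \updateView \rangle$ messages, or on the reader's co-located state) can simultaneously force the correct return in $\alpha_1$, $\alpha_2$, and $\alpha_3$. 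Turning this slack intuition into a clean combinatorial lower bound is the principal technical hurdle.
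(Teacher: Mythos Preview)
Your overall indistinguishability strategy is sound, but the specific scenario you build---two concurrent writes and a single read---does not yield a contradiction, and this is not merely the ``technical hurdle'' you flag at the end; it is structural. In your $\alpha_3$ the two writes are concurrent, so linearizability permits the subsequent read to return \emph{either} $v_1$ or $v_2$. Showing that $R^*$'s view in $\alpha_3$ matches its view in $\alpha_1$ only forces it to output $v_1$, which is legal; invoking a ``symmetric variant'' in which $R^*$ instead matches $\alpha_2$ and outputs $v_2$ is also legal---that is a different execution. To contradict atomicity with one reader you would need $R^*$'s view in a \emph{single} run to coincide simultaneously with its views in $\alpha_1$ and in $\alpha_2$, but those two views are themselves distinct. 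There is also trouble one step earlier: any two majority quorums in $n=7$ intersect, and the intersection node's acknowledgment to the second writer is issued from a state that has already absorbed the first write, so for a correct algorithm that ack will in general differ from the solo-run ack---you cannot make \emph{both} writers' local transcripts byte-identical to their respective $\alpha_i$. Finally, the covering argument you sketch is phrased against specific message formats ($\langle\textsc{update-view}\rangle$, annotated acks); an impossibility proof must be algorithm-agnostic.

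The paper sidesteps all of this by using \emph{one} write and \emph{two} sequential reads, obtaining the violation as a new--old inversion between the reads rather than an ambiguity at a single read. For $n=7$, $f=3$: a write of $x$ at $R_e$ completes via quorum $\{R_d,R_e,R_f,R_g\}$; a read $r_1$ at $R_a$ then uses $\{R_a,R_b,R_c,R_d\}$ and must return $x$ (execution E1). In an indistinguishable E2 the writer crashes and only $R_d$ ever receives $x$, yet $r_1$'s quorum and hence its output are unchanged. Extend E2 to E3 by crashing $R_a$ and $R_d$ right after $r_1$ completes (three crashes total, allowed since $f=3$) and losing their in-flight messages; a second read $r_2$ at $R_b$ now uses $\{R_b,R_c,R_f,R_g\}$, none of which ever heard of $x$, and returns the initial value---contradicting atomicity because $r_2$ began after $r_1$ (which returned $x$) ended. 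Every one of these executions contains at most one write and no concurrency, so the optimistically-fast guarantee applies directly and forces each operation to terminate in one RTT; no indistinguishability reasoning on the writers' side is needed. The role of $n>5$ is exactly that three crashes are required to sever all information about $x$ from $r_2$'s quorum.
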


\begin{proof}[Proof Sketch]
The proof is based on an indistinguishability argument, which constructs several executions indistinguishable to nodes such that in one of the executions, a reader has to return a value that violates linearizability. All the executions we construct have no concurrent write; hence, the optimistically fast operations require all operations to complete in 1 RTT. 

Consider $n=7$ with nodes $R_a$ to $R_g$. Since $f=3$, the maximum quorum size to ensure liveness is $4$. Now, consider the following executions such that the first write $w_1$ is invoked by a writer at node $R_e$ and writes value $x$:

\begin{itemize}[noitemsep,nolistsep]
    \item E1: $w_1$ is completed with a write quorum $\{R_d, R_e, R_f, R_g\}$. All the messages from the write quorum to other nodes are delayed, except for the messages between $R_d$ and $R_a$. At some time $t$ after $w_1$ completes, reader at node $R_a$ invokes a read $r_1$ that completes with a read quorum $\{R_a, R_b, R_c, R_d\}$ and returns $w_1$'s value $x$.

    \item E2: Only node $R_d$ receives $w_1$, because node $R_e$ and its writer client crash during the write. The messages from $R_e$ are all lost, because it has crashed. The messages from $R_d$ to nodes other than node $R_a$ are delayed. 
    At time $t$, reader at node $R_a$ invokes a read $r_1$ that completes with a read quorum $\{R_a, R_b, R_c, R_d\}$. Since E1 and E2 are indistinguishable from the perspective of node $R_a$ and its reader, the read returns $x$.

    \item E3: Now, we construct E3 by extending E2. Right after the read $r_1$ completes, nodes $R_a$ and $R_d$ crash. This is allowed since $f=3$. Furthermore, the messages from $R_a$ and $R_d$ to all the other nodes are lost, because they have crashed. As a result, none of $R_b, R_c, R_f, R_g$ learns the existence of $w_1$.
    
    At some later time, a reader at $R_b$ invokes another read $r_2$ that completes with a read quorum $\{R_b, R_c, R_f, R_g\}$, and returns a default value, violating linearizability. 
\end{itemize}
It's straightforward to extend the argument to a larger $n$.
\end{proof}

Fundamentally, the impossibility is because that the quorum intersection is too small for a larger $f$. Due to the 1-RTT communication, readers or writers are \textit{not} able to update all nodes in the read or write quorums. This is why we can defer messages in the proof. In the case of ABD, the read quorum of $r_1$ will learn the most recent value before $r_1$ completes because of the write-back. Consequently, the read quorum for $r_2$ would return $x$. 

Note that in the construction above, we require 3 nodes to fail. This is why \name{} works for $n \leq 5$. For example, when $n=3$, the union of the reader and the quorum intersection is enough to ensure safety; hence, E3 is impossible and readers can learn the correct value that satisfies linearizability. To circumvent the impossibility, we either need to sacrifice optimistically fast operation or resilience. 


    

	\section{Practical Considerations}
\label{s:optimization}


\subsection{Scalability}
\label{s:optimization-scalability}

To increase scalability, we present two solutions for $n > 5$. The first increases 1 RTT for writes, which is suitable for serving larger objects because of its natural integration with erasure coding. The second increases the quorum size by focusing on the case of a smaller number of concurrent failures (relaxed resilience), a common case for modern geo-replicated systems  \cite{Tempo_Alexey_Eurosys21,Atlas_Sutra_Eurosys20,Spanner_OSDI12}. For $n \leq 5$, these solutions are not needed, and therefore not applied.

\noindent\textbf{Layered design by separating metadata and data}:~
Inspired by Giza \cite{Giza_ATC2017} and Layered
Data Replication \cite{LDR_Lynch_DISC2003}, we integrate a layered design with \name{}, which separate the data and metadata paths into two layers. To read, a client first contacts the servers in the metadata layer to find the set of data servers that have the most recent data, and then reads the data from any of them. To write, a client first writes its data to a set of data servers, then update the metadata servers. Such a layered design allows the underlying data/metadata servers to optimize different workloads and features. Giza uses Azure object storage as the data server and Azure table as the core of metadata layer. 
Giza adopts Fast Paxos \cite{lamport2006fast} to replicate the metadata (i.e., the version, the IDs and the locations of the object) to 3 or 5 metadata servers, whereas \name{} uses Algorithm \ref{algo:Gus}. In our implementation, we use Redis as the data server because of its high performance and support of durability. 

As observed in \cite{Giza_ATC2017,Pando_NSDI2020,LEGOStore_Viveck2021}, to save storage and network cost, it is common to use erasure coding for the data layer. 
For larger objects, we adopt the $n = k + m$ Reed-Solomon code \cite{EC_book_Cambridge2003} -- the value is divided into $k$ data fragments, and the encoder generates $m$ parity fragments. Each data server  stores exactly one fragment. The object is durable as long as at most $m$ node fails. With erasure coding, both writes and reads take 2 RTT on the fast path.

\vspace{3pt}
\noindent\textbf{Increasing quorum sizes by lowering resilience}:~Concurrent failures in replication across datacenters are rare and transient \cite{Atlas_Sutra_Eurosys20,Tempo_Alexey_Eurosys21,Spanner_OSDI12}; hence, it is reasonable to focus on a smaller $f$ with a larger $n$. 
Let $Q_R$ and $Q_W$ be the size of the read and write quorum, respectively. As long as they satisfy the following inequalities, \name{} ensures safety: \hspace{24pt} $2Q_W > n$~~~\text{and}~~~$2n - 2Q_W -1 < Q_R$ 




\noindent The first part ensures that any two write quorums intersect with each other, whereas the second part 
ensures that any write can only be associated with one tag (which can be argued similar as before). 
As long as a writer (or reader) can reach a write (or read) quorum, then its operation can be completed. 

For read-intensive workloads, we can let $Q_R$ be 
$\lfloor n/2 \rfloor+1$. Then  $Q_W > n-\frac{\lfloor n/2\rfloor}{2} - 1$. For write-intensive workloads, we can lower write quorum size by increasing read quorum size accordingly. In other words, tolerating less failures allows \name{} to explore a trade-off between quorum sizes and performance of different operations.

\begin{table*}[t]
\centering
\begin{tabular}{l|c|c|c|c|l}
 & \multicolumn{2}{c|}{Latency} & Fast-Quorum Size  & Optimistically &  Limitation\\ \cline{2-3}
& \multicolumn{1}{c|}{Read}        & Write       &      &  Fast Ops     &   \\ \hline
EPaxos \cite{EPaxos_SOSP13} & \multicolumn{1}{c|}{1/2}          & 1/2           & $f+\lfloor (f+1)/2  \rfloor$      & read/write     & tail latency, dependency tracking \\\hline
Gryff \cite{Gryff_Lloyd_NSDI2020}   & \multicolumn{1}{c|}{1/2}          & 2/2           & $f+\lfloor (f+1)/2  \rfloor$   & read       & write latency, throughput \\\hline
Giza \cite{Giza_ATC2017}   & \multicolumn{1}{c|}{1/2}          & 1/2           & $\lfloor 3n/4  \rfloor + 1$   & -     &  coordinator, large fast-quorum \\\hline
Gus (this work)    & \multicolumn{1}{c|}{1/2}          & 1/2           & $\lfloor n/2  \rfloor + 1$       & read/write    & all properties when $n=3$ or $5$
\end{tabular}
\caption{\name{} vs. related leaderless systems designed for geo-redundancy. The first number in the Latency column indicates the RTT in the common case (fast path), and the second is the RTT with contention. All of the systems tolerate $f$ crashes with $n=2f+1$ nodes.  Both EPaxos and Gryff support read-modify-write, whereas Giza and \name{} do not. When $n$ is beyond 5, some properties of \name{} no longer hold.}
\label{t:comparison}
\end{table*}

\subsection{Optimizing Reads in \name{}}
We have two approaches to optimize reads in \name{}. Consider the case of $n=3$. \name{}'s read only needs 1 RTT with one simple change -- piggybacking the value associated with the highest tag in $\langle \ackRead \rangle$ at Line 42. Since any two nodes form a read quorum $Q_R$, upon receiving the value associated with $tag^{max}$, the reader can update $View$ and directly return the value, which must satisfy Condition \textsc{SafeToReturn}. 
The second optimization can be applied to the case when $n \leq 5$ and when a node serves several reader clients (a typical case in practical systems). Observe that read does not change the state at other nodes; hence, when there are multiple concurrent readers co-located in the same data center, then all the subsequent reads can ``tag along'' the first read \textit{without} sending any messages. 



\section{Evaluation}
\label{s:evaluation}

We evaluate \name{} in practical settings. Our evaluation is focused on the case of geo-replicated object storages, because (i) atomic registers capture its semantic \cite{Giza_ATC2017,Gryff_Lloyd_NSDI2020}; and (ii) round-trip time matters the most for user-perceived latency in the case of geo-replication, as cross-datacenter latency can be in the order of 100+ms. 

As discussed in Section \ref{s:related}, prior algorithms \cite{ABD_Analysis_PODC2020,ABD_read_Rachid_PODC2004,ABD_write_OPODIS2009} with fast operations have limited practical usages due to their stringent conditions. Therefore, we compare \name{} with consensus-based systems. Even though these systems only ensure liveness when the network is partially synchronous, they have high-performance in common cases. We first present related systems that are optimized for geo-replication, followed by our evaluation.

\subsection{Related Work: Geo-replicated System}
\label{s:related2}

A comparison of \name's features against state-of-the-art competitors is outlined in Table~\ref{t:comparison}. To ensure a total ordering, storage systems often adopt the consensus-based approach. Most production systems \cite{Megastore_CIDR2011,Azure_SOSPI11,CockroachDB,Spanner_OSDI12,etcd,Physali_NSDI2020} rely on variants of Paxos \cite{lamport1998part,lamport2001paxos} or Raft \cite{Raft_ATC14} for agreeing on the order of client commands (or requests) and execute the commands following the agreed order. Unfortunately, these leader-based consensus protocols suffer long latency -- 2 RTT (cross-datacenter message delay) -- if the clients are not co-located with the leader data center.  

Many recent systems \cite{Atlas_Sutra_Eurosys20,M2Paxos_DSN16,EPaxos_SOSP13,Caesar_DSN17,Mencius_Marzullo_OSDI08,Tempo_Alexey_Eurosys21} propose a leaderless design to avoid the bottleneck at the leader and achieve optimistically fast operations.\footnote{It is also called ``optimal commit latency'' in \cite{EPaxos_SOSP13}; however, the term is typically used for consensus-based systems. Hence, we use a different term to avoid confusion.} 
EPaxos commits commands in 1 RTT when there is no contention, and 2 RTT with contention. 
Unfortunately, EPaxos has worse tail latency than Paxos-based systems (up to 4x worse) \cite{Gryff_Lloyd_NSDI2020} and may have a livelock in pathological cases \cite{EPaxos_revisited_NSDI2021}. This is mainly because EPaxos's fine-grained dependency tracking may chain dependency recursively, and the execution of some operations may be delayed in wide-area networks  \cite{EPaxos_revisited_NSDI2021}. 

Gryff \cite{Gryff_Lloyd_NSDI2020} reduces tail latency by unifying consensus and shared registers. Gryff implements an abstraction that provides read, write and read-modify-write (RMW) on a single object. On a high-level, it uses ABD register \cite{AttiyaBD1995} to process reads and writes, and EPaxos \cite{EPaxos_SOSP13} to process RMWs. 
While Gryff reduces p99 read latency compared to EPaxos, it always takes 2 RTT to complete a write; hence, it does not achieve optimistically fast writes and is not suitable for write-intensive workloads like game hosting or enterprise backup service that typically has around 90\% of writes \cite{IBM-Cloud-Object_PDSW2015}. 

Giza uses Fast Paxos \cite{lamport2006fast} to agree on the version for each operation, and needs only 1 RTT when there is no concurrent write. Two downsides of Giza are its reliance on the coordinator to order concurrent write operations and that its fast-quorum requires a super majority. Both affect tail latency, especially for the geo-replicated storage systems, because the clients need to wait for the nodes or the coordinator in the further datacenters. 

\textsc{Atlas} \cite{Atlas_Sutra_Eurosys20} and \textsc{Tempo} \cite{Tempo_Alexey_Eurosys21} are two recent consensus-based systems that sacrifice resilience to optimize performance. \textsc{Atlas} uses dependency tracking; hence, suffers from long tail latency. \textsc{Tempo} develops a novel mechanism of using (logical) timestamps to determine when it is safe to execute a particular operation. 
Both systems have quorum size $\lfloor n/2 \rfloor + f$, which is optimal when $f=1$. \textsc{Atlas} and \textsc{Tempo} do not distinguish between read and write quorums. Compared to them,  \name{} can be configured to have an optimal read quorum size, while having the write quorum size the same or greater by 1. Table \ref{t:quorum-size} presents some examples. \name{}'s smaller read quorum not only allows a better read latency, but also ensures that reads can still complete, when $\geq \lfloor n/2 \rfloor+1$ nodes are alive. For the case of $n=11$, \textsc{Tempo} requires a quorum of $8$, which equals to the write quorum of \name. Reads can be served with a quorum of $6$ in \name.  Later in Section \ref{s:scalability}, we will see how a smaller quorum size allows \name{} to have better tail latency under practical workloads. 

\begin{table}[t]
\centering
\begin{tabular}{ll||l|l|c}
   &    & \multicolumn{2}{l|}{~~~\name} & \textsc{Atlas}/\textsc{Tempo} \\\hline
$n$  & $f$ & $Q_R$         & $Q_W$         & $Q$     \\\hline\hline
7  & 2  &   4      &   5         &    5   \\
9  & 2  &   5     &     7       &     6  \\
11 & 3  &   6      &     8      &   8 \\
13 & 3  &   7      &     10      &   9 \\
\end{tabular}
\caption{Read/write quorum size ($Q_R/Q_R$) in \name{}, and quorum size ($Q$) in \textsc{Atlas} and \textsc{Tempo}, where $f=$ number of tolerated concurrent failures.}
\label{t:quorum-size}
\vspace{-30pt}
\end{table}

Other consensus-based systems achieve optimistically fast operations for both reads and writes, e.g., M$^2$Paxos \cite{M2Paxos_DSN16}, Caesar \cite{Caesar_DSN17}, and Mencius \cite{Mencius_Marzullo_OSDI08}. Each system performs well in certain cases. To support more general operations, e.g., transactions or RMW, they sacrifice high-performance under high skewed workload. Both EPaxos and Caesar use dependency tracking, which leads to high tail latency \cite{EPaxos_revisited_NSDI2021}.  M$^2$Paxos requires a lock on an object; hence, not suitable for workloads with high contention. Mencius need information from all nodes. 

\subsection{Implementation and Experiment Setup}

In our evaluation, we focus on tail latency, because it is well-known that user-perceived latency is correlated with the tail latency of the underlying storage systems \cite{Tail_at_scale_CACM2013,Facebook_Memcached_workload_sigmetrics12,Facebook_Memcache_NSDI13,Latency_blog2009}. 
We evaluate \name{} against  two categories of competitors: (i) those aiming/optimizing for fault-tolerant non-blocking MWMR registers (Gryff and Giza), and (ii) state-of-the-art consensus systems (EPaxos and \textsc{Tempo}) that are optimized for the scenarios that Gus is optimized for.

For Gryff, we are essentially evaluating its ABD component (and Gryff's optimizations), as the workload consists of only reads and writes. 
For Giza, we only focus on the tail latency \textit{without} any concurrent write. As documented in \cite{Giza_ATC2017}, its design is not optimized for concurrency. For scalability, we compare \name{} with \textsc{Tempo} so that they tolerate the same number of concurrent failures.

Recent systems \cite{Pando_NSDI2020,CRaft_Fast2020,Nil-Externality_SOSP21} use techniques such as coding and nil-externality to further improve performance. We do not compare against them, due to their leader-based design. We mainly focus on leaderless systems, because as demonstrated in  \cite{EPaxos_SOSP13,Gryff_Lloyd_NSDI2020,Giza_ATC2017}, leaderless systems have better performance in both common case and tail latency in the context of geo-replicated storages.

\vspace{3pt}
\noindent\textbf{Implementation.}~We implemented \name{}\footnote{\url{github.com/bc-computing/gus-automation}} and our version of Giza (source code not available) in Go using the framework of EPaxos and Gryff to ensure a fair comparison between protocols. For \textsc{Tempo}, we use the implementation in \cite{Tempo_Alexey_Eurosys21}. 

Clearly, even though in Algorithm~\ref{algo:Gus} we focus on a single register (or object) for clarity, our implementation supports multiple objects and adopts the optimizations mentioned in Section \ref{s:optimization}. In order to do that, we include two extra fields in each message type -- key $key$ and sequence number $seq$. The key denotes the identifier of each object, and the sequence number is the operation index. This allows \name{} to support multiple objects and also pipelining. We do \textit{not} enable thrift optimization nor batching, because these optimizations generally increase the tail latency, by increasing the chance of conflicts \cite{EPaxos_revisited_NSDI2021,EPaxos_SOSP13,Gryff_Lloyd_NSDI2020}.

In addition, we follow the same setup in \cite{EPaxos_revisited_NSDI2021,EPaxos_SOSP13,Gryff_Lloyd_NSDI2020} to separate node and client machines for best performance. Each node has several client proxies that handle requests from the respective client. 

\begin{table}[]
\centering
\begin{tabular}{l|lllll}
   & CA  & VA  & IR  & OR  & JP  \\\hline
CA & 0.2 &     &     &     &     \\
VA & 72  & 0.2 &     &     &     \\
IR & 151 & 88  & 0.2 &     &     \\
OR & 59  & 93  & 145 & 0.2 &     \\
JP & 113 & 162 & 220 & 121 & 0.2
\end{tabular}
\caption{RTT (in ms) between VMs in
emulated geographic regions \cite{Gryff_Lloyd_NSDI2020}. For $n=3$, we use VMs in CA, VA, and IR.}
\label{t:RTT}
\end{table}

\begin{figure*}[t]
\begin{subfigure}{0.33\textwidth}
  \centering
  \includegraphics[width=\textwidth]{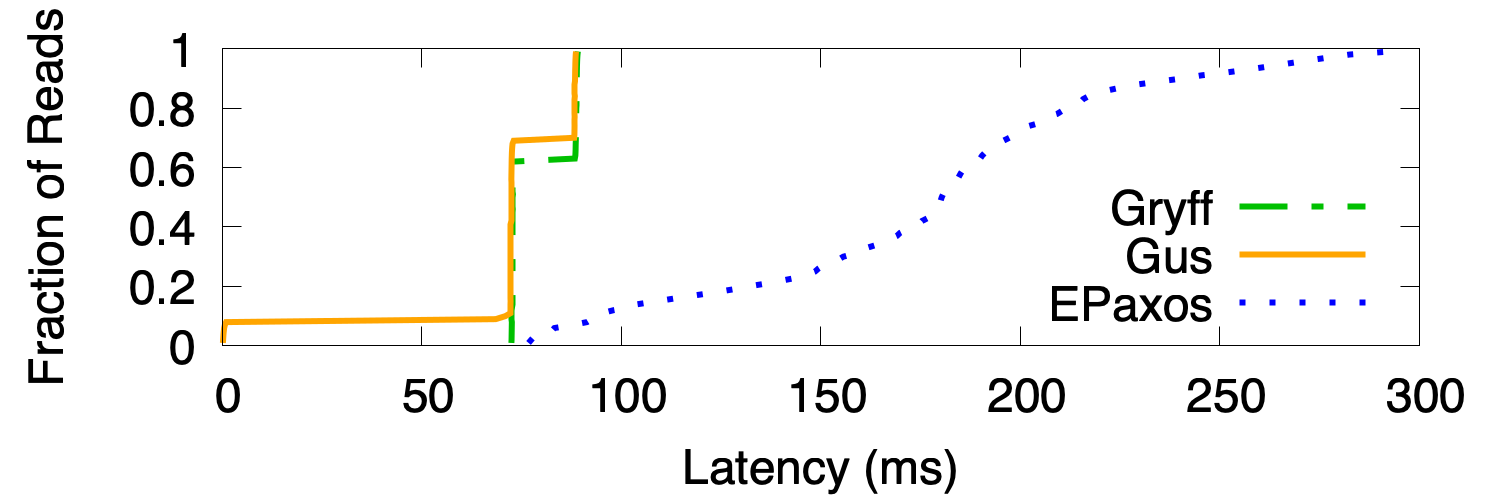}
  \includegraphics[width=\textwidth]{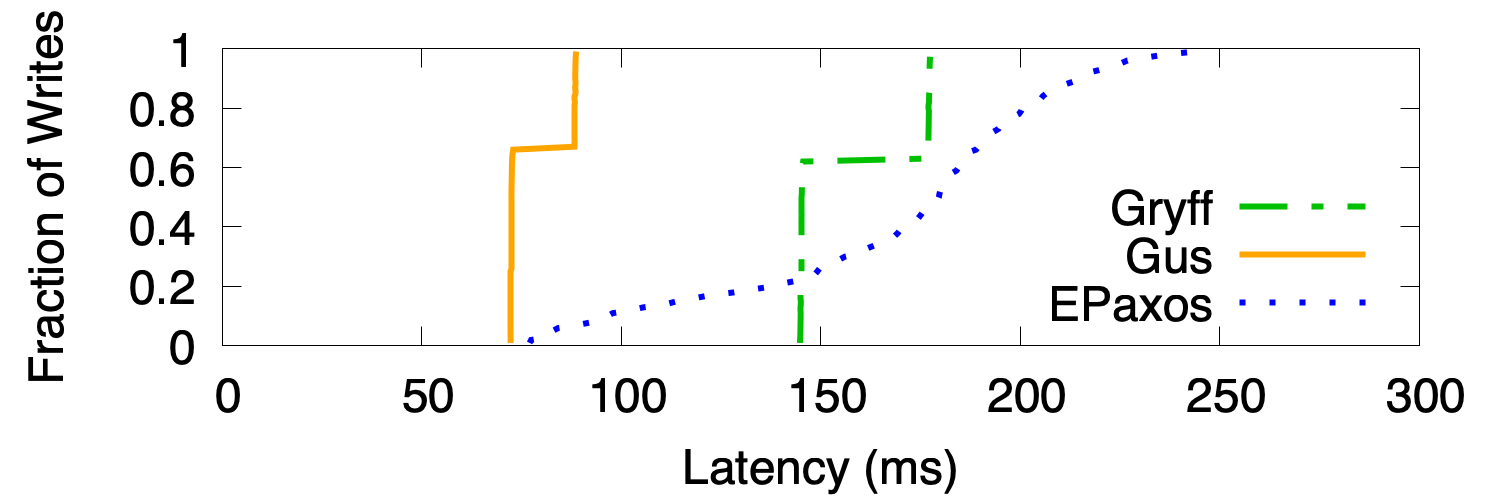}
  \caption{\textbf{2\% Conflict}}
  \label{fig-sub:6a}
\end{subfigure}%
\begin{subfigure}{0.33\textwidth}
  \centering
  \includegraphics[width=\textwidth]{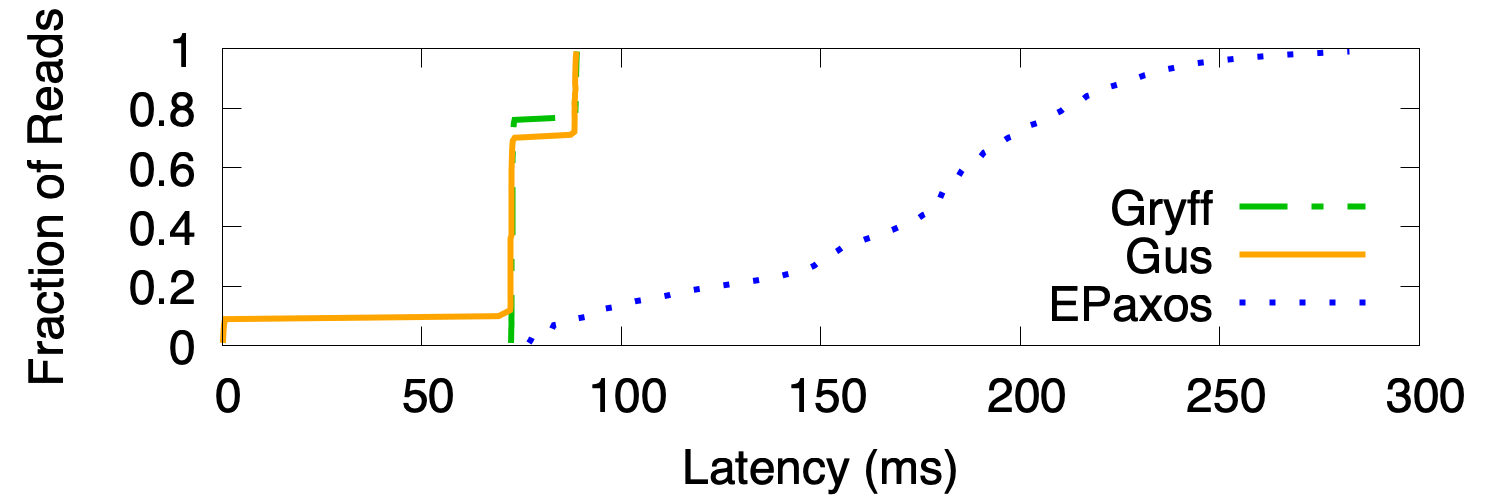}
  \includegraphics[width=\textwidth]{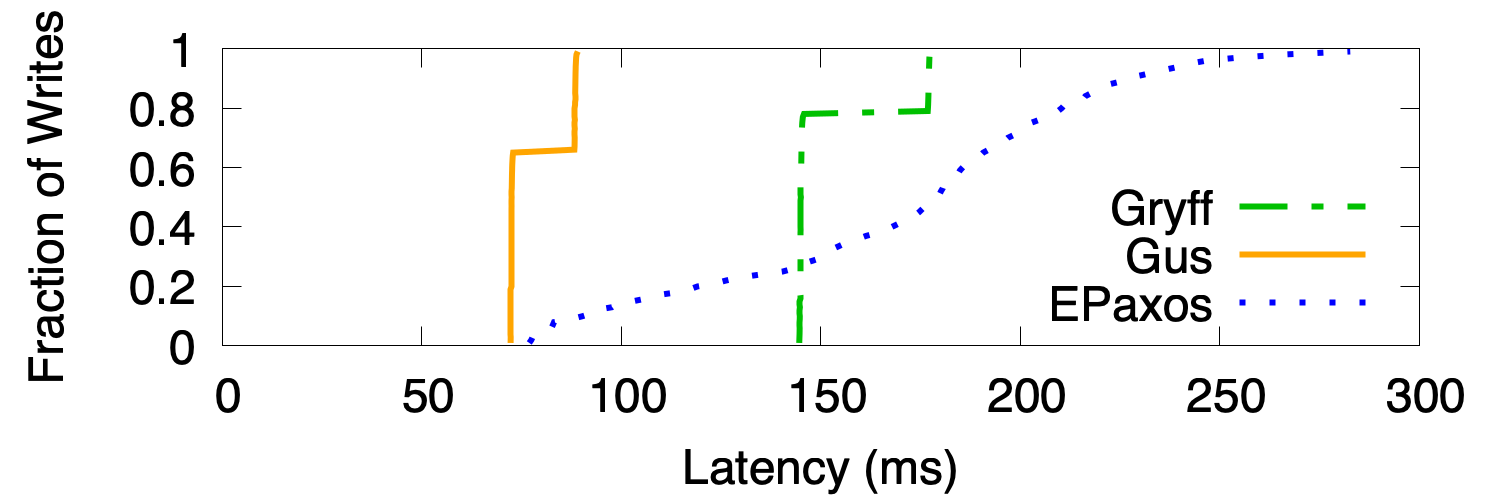}
  \caption{\textbf{10\% Conflict}}
  \label{fig-sub:6b}
\end{subfigure}
\begin{subfigure}{0.33\textwidth}
  \centering
  \includegraphics[width=\textwidth]{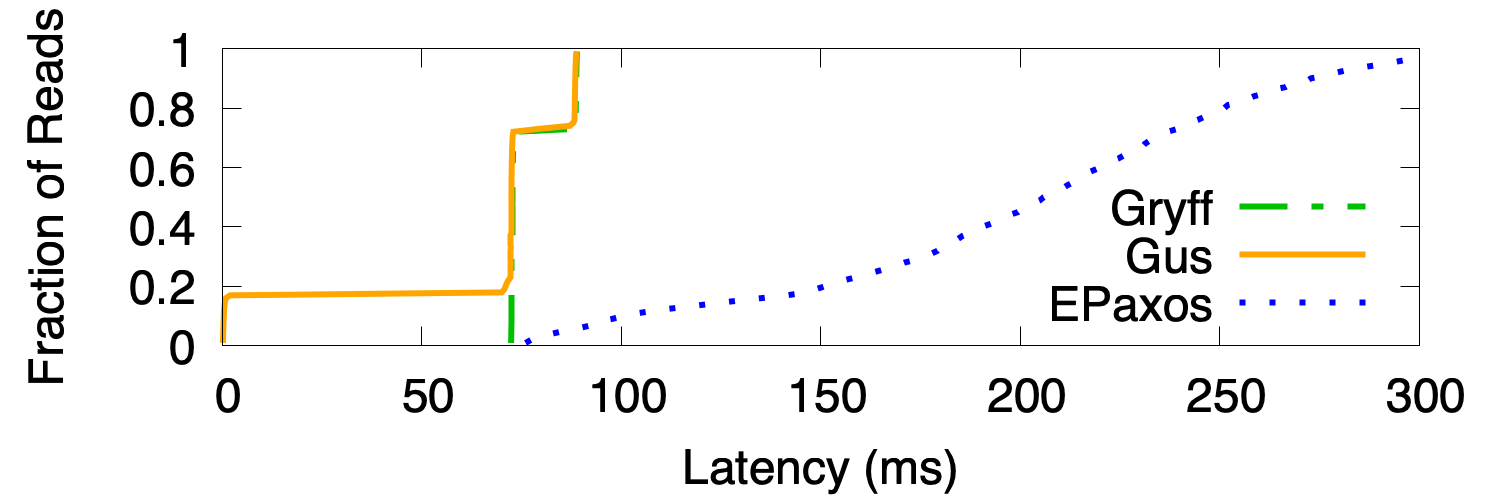}
  \includegraphics[width=\textwidth]{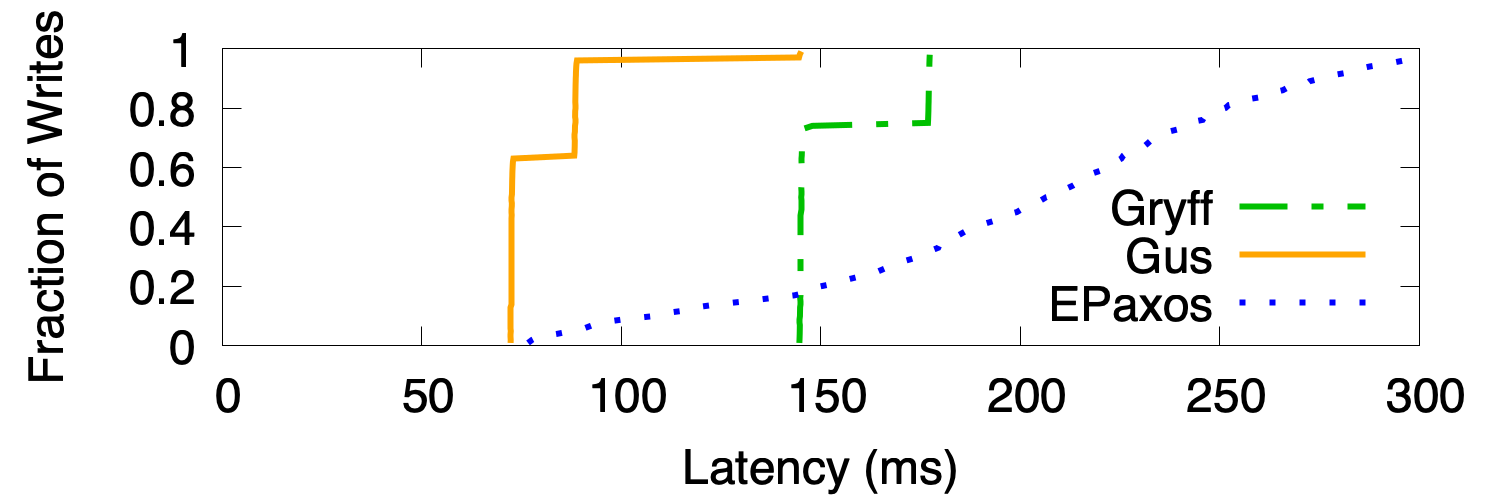}
  \caption{\textbf{25\% Conflict}}
  \label{fig-sub:6c}
\end{subfigure}

%
\vspace{-10pt}
\caption{Latency CDF ($n=3$, 94.5\% reads, 5.5\% writes). Both Gryff and \name{} complete reads in 1 RTT when $n=3$.}
\label{fig:6}
\end{figure*}

\vspace{3pt}
\noindent\textbf{Testbed.}~ We run our experiments on CloudLab \cite{CloudLab_ATC19} using m510 (Intel Xeon D-1548, 8 cores, 6GB RAM) for node VMs and c6525-25g (AMD EPYC 7302P, 16 cores, 8GB RAM) for the client VM. We adopt the same latency profile used in \cite{Gryff_Lloyd_NSDI2020} -- (i) $n=3$: nodes in California (CA), Virginia (VA), and Ireland (IR); and (ii) $n=5$: two more
nodes in Oregon (OR) and Japan (JP). The latencies of the wide-area network are emulated using Linux’s Traffic Control (tc) by adding delays to packets on all nodes with filters on different IPs. Table \ref{t:RTT} shows the configured RTT between nodes in different regions. These numbers were chosen to represent typical RTT between the corresponding Amazon EC2 availability regions \cite{Gryff_Lloyd_NSDI2020}. 

\vspace{3pt}
\noindent\textbf{Experiment Setup.}~In all experiments except for the ones in Section \ref{s:erasure}, the clients run on one client VM, which has no artificial latency to all the other node VMs. In experiments testing the integration of the layered design (Section \ref{s:erasure}), the clients are located in the CA region. For most experiments, we use 16 closed-loop clients co-located with each node, again following \cite{Gryff_Lloyd_NSDI2020}. This setup balances between capturing the effect of concurrent operations and avoiding saturating the system. This also allows us to isolate limitations of the hardware and software. We use different numbers of clients to stress the systems in the throughput experiment.

In our implementation, despite the fact that clients and servers are indeed co-located, clients do not interact directly with the system's backend but pass through a proxy interface, which emulates an intermediate tier typically deployed in data centers for security and access control purposes. In other words, the backend in our implementation supports remote clients by the usage of proxy.

For all experiments, we require the system to commit and execute client requests before responding to clients. Since as observed in \cite{EPaxos_revisited_NSDI2021}, most applications depend on information or confirmation returned by an operation. For example, Redis and ZooKeeper \cite{ZooKeeper_ATC10} return results for both reads and writes. 

Each experiment is run for 180 seconds, and we collect statistics in the middle of 150 seconds. In our experience, the statistics are quite stable during this period because of the removal of warm-up and cool-down. By default, each object is of 16B. While large objects are common in object storages \cite{Giza_ATC2017,IBM-Cloud-Storage_HotStorage2020}, 16B gives the best performance for EPaxos and Gryff, and in \cite{Facebook_Memcached_workload_sigmetrics12}, Facebook reported the workload of using Memcached as a key-value store, where 40\% of the data is less than 11B. Therefore, we mainly test 16B objects. 

Following prior works \cite{EPaxos_SOSP13,Gryff_Lloyd_NSDI2020,EPaxos_revisited_NSDI2021}, all the systems we test store the data in the main memory, except for two sets of experiments. This choice is reliable as long as the number of concurrent node failures is bounded. We are targeting redundancy across data centers, which are very rare to fail concurrently  \cite{Giza_ATC2017,Atlas_Sutra_Eurosys20,Spanner_OSDI12}. Moreover, there exist solutions that prevent data loss from crashed machines, e.g., persistent memory or disaggregated memory \cite{FORD_Fast22,Intel_persistent_memory,PersistentMemoryStudy_Fast2020,DisaggregatingPersistentMemory_ATC20}. For persistence, we evaluate two approaches: (i) writing to the disk using Redis in Section \ref{s:erasure}, and (ii) writing to a file using Go's OS package in Section \ref{s:eval-durability}. 

Two operations are said to be conflicting with each other if they are targeting the same object (or same key) \cite{EPaxos_SOSP13,Gryff_Lloyd_NSDI2020,EPaxos_revisited_NSDI2021,Giza_ATC2017}.  
Following the evaluation in \cite{EPaxos_SOSP13,Gryff_Lloyd_NSDI2020,Atlas_Sutra_Eurosys20,Tempo_Alexey_Eurosys21}, we focus on the evaluations with various conflict rates. A conflict rate $p$ denotes that a client chooses the same key with a probability $p$, and some unique key otherwise. Workload with a Zipfian distribution \cite{EPaxos_revisited_NSDI2021} shows a similar pattern.

\subsection{Summary of Our Findings}
To understand whether \name{} performs well under various settings, we aim at answering the following questions:

\begin{itemize}[noitemsep,nolistsep]
\item Does \name{} reduce tail latency under various conflict rates and write ratios? (Section \ref{s:tail-latency})

\item How does the throughput of \name{} compare to the state-of-the-art competitors? (Section \ref{s:throughput})

\item How does object size and write ratio affect the performance of \name{}? (Section \ref{s:throughput} and Section \ref{s:erasure})

\item How does \name{} perform when integrated with the layered design and erasure coding? (Section \ref{s:erasure})

\item How does persistence affect latency? (Section \ref{s:eval-durability})

\item How does \name{} scale when tolerating a smaller number of concurrent failures? (Section \ref{s:scalability})
\end{itemize}

To summarize our findings, under various conflict rates, \name{} has better read and write tail latency than both Gryff and EPaxos. When $n=3$, around 5\%-18\% of reads are faster than Gryff, even though both systems complete reads in 1 RTT. This demonstrates the effectiveness of our read optimization mentioned in Section \ref{s:optimization}. \name{}'s maximum throughput is 0.5x-4.5x greater than Gryff and EPaxos in the context of geo-replication with a write-intensive workload. The write ratio does not have a significant impact on throughput in \name, whereas it impacts Gryff significantly because of its 2-RTT writes. Finally, \name{}'s reads are 12.5\%-17\% faster than \textsc{Tempo} because of the smaller read quorum size.

\subsection{Tail Latency}
\label{s:tail-latency}

\noindent\textbf{\textit{The Case of $n=3$}.}~
First, we examine the tail latency of \name, with a focus on large-scale web hosting. Since the web hosting applications is usually read-heavy \cite{YCSB_SoCC2010,Facebook_TAO_ATC13,Facebook_Memcached_workload_sigmetrics12}, we use the ratio of 94.5\% read operations with various conflict rate. This write ratio is the same as the YCSB-B workload \cite{YCSB_SoCC2010}. 
Figure \ref{fig:6} presents the cumulative distribution functions (CDF) for both read and write latency for clients from three regions (CA, VA, IR) for three different conflict rates with $n = 3$. Top row represents the CDF for reads and bottom row for writes.

\noindent\textbf{1 RTT Reads for \name{} and Gryff.} Both \name{} and Gryff complete reads in 1 RTT, as shown in the top row of Figure \ref{fig:6}. $\sim$66\% of reads complete after 1-RTT of communication with the nearest quorum (a simple majority) between CA and VA, which has latency of 72ms. Clients in IR are closest to the nodes in IR and VA, so 33\% of the reads complete in 1 RTT between IR and VA, which is 88ms. 

\noindent\textbf{Read Optimization of \name{}.} As mentioned in Section \ref{s:optimization}, \name{} exploits the semantics of linearizable object storages to return reads without any communication when there are concurrent readers co-located within the same data center. Depending on when the concurrent reads are invoked, the latencies vary from 0.755ms to 72ms for clients in CA and VA. 

\noindent\textbf{Impact of Instant Execution.}
As identified in \cite{EPaxos_revisited_NSDI2021,Gryff_Lloyd_NSDI2020}, in EPaxos, some operations need to be delayed due to its dependency tracking, which results into a higher latency. In comparison, Gryff and \name{} can execute an operation instantly. 

\noindent\textbf{Impact of Conflict Rate.}
For both Gryff and \name, conflict rate does not affect latency significantly. This is because reads complete in 1 RTT, and writes always complete in 2 RTT in Gryff. With a higher conflict rate, \name{}'s reads have improved latency in the common case, owing to the read optimization. Higher conflict rate implies a higher chance for reads to tag along. With 25\% conflict, \name{}'s writes occasionally need to take 2-RTT to complete.

\vspace{5pt}
\noindent\textbf{\textit{The Case of $n=5$}.}~Figure \ref{fig:n=5} reports the cumulative distribution functions of the latency of reads and writes with $n=5$. In this experiment, we use workload consisting of 49.5\% reads and 50.5\%  writes with 25\% conflicts. The write ratio follows from YCSB-A. Roughly equal amount of operations and the higher conflict rate allow us to observe the performance under concurrent operations. 

Again, having more concurrent operations allows \name{} to complete reads faster. Its writes are also faster than Gryff's because of \name{}'s optimistically  fast writes. A faster write also reduces the chance of concurrent access. This is the reason that, compared to Gryff, more reads in \name{} can be completed on the fast path. These results indicate that even with a write-heavy workload,\footnote{YCSB-A is the only write-heavy workload documented in  \cite{YCSB_SoCC2010}.} \name{} has better latency than both EPaxos and Gryff. 

\begin{figure}[h]
	\includegraphics[width=\linewidth]{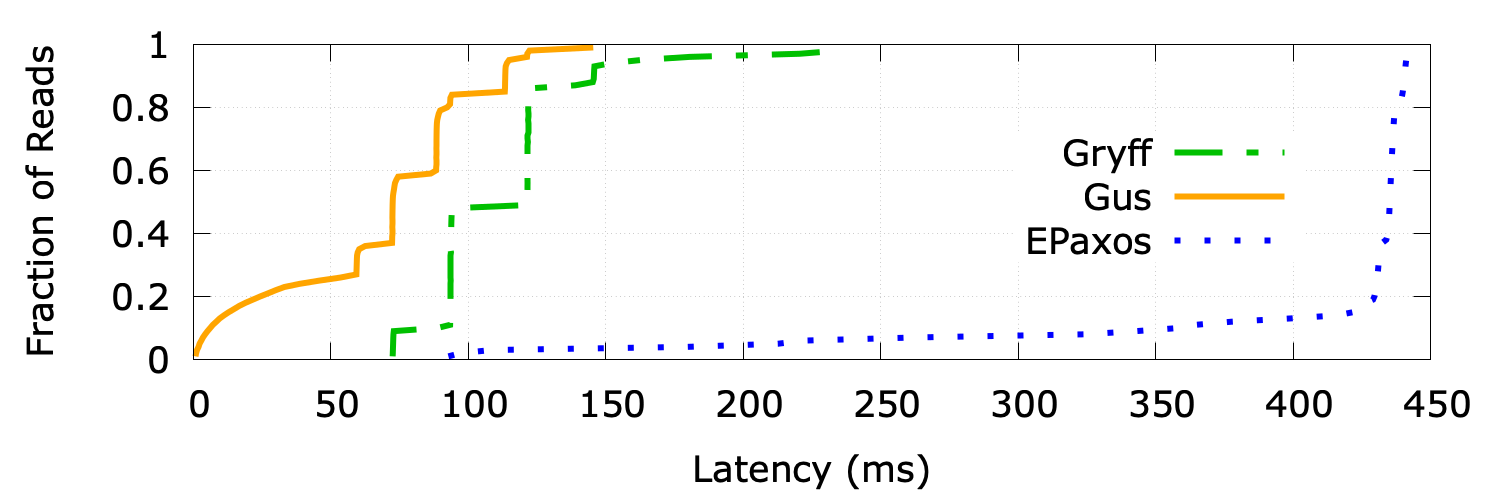}
	\includegraphics[width=\linewidth]{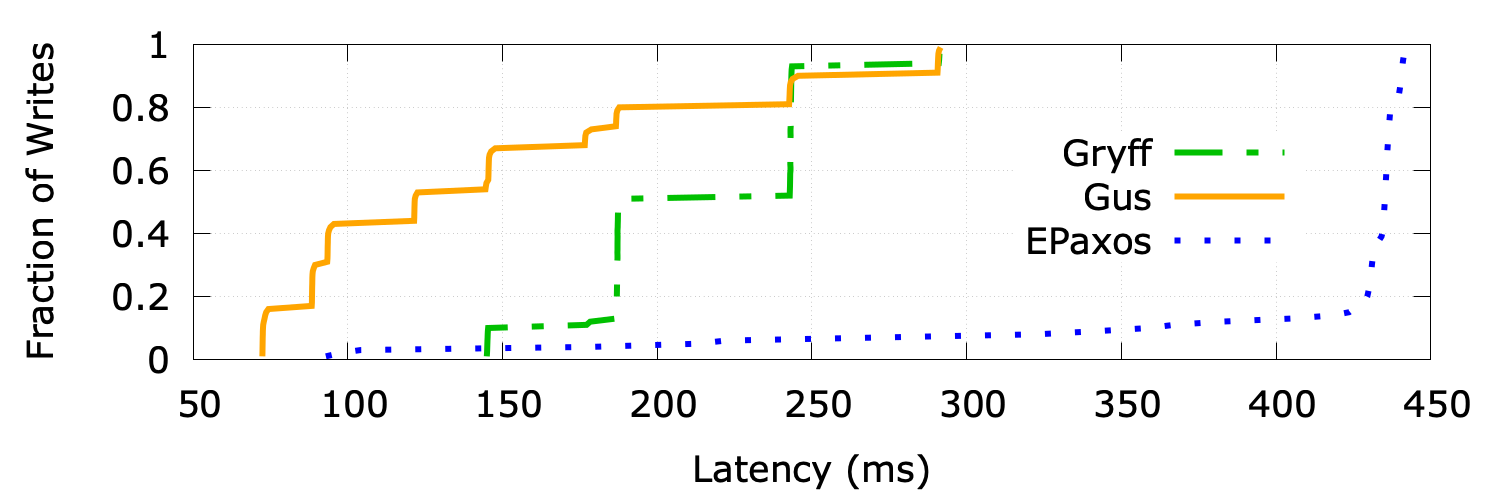}
	\vspace{-20pt}
	\caption{Latency CDF ($n=5$, 50.5\% writes, 25\% conflicts). \name{}'s reads are faster than those of competitors due to the read optimization under high conflict rate and write ratio.}
	\label{fig:n=5}
\end{figure}

\subsection{Throughput vs. Write Ratio in WAN}
\label{s:throughput}

We also measure the maximum attainable throughput with various write ratios with $n=3$ (Figure \ref{fig:throughput-writeRatio}). \name{} maintains the highest throughput regardless of the write ratio. Gryff slows down for high write ratios because of its 2-RTT writes. EPaxos has the lowest throughput due to its dependency tracking. 

\begin{figure}[!ht]
	\includegraphics[width=\linewidth]{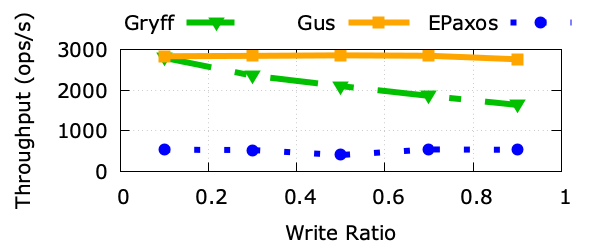}
	\vspace{-20pt}
    \caption{Throughput vs. Write Ratio ($n=3$, 10\% conflicts).}
	\label{fig:throughput-writeRatio}
\end{figure}

\subsection{Scalability: Layered Design}
\label{s:erasure} 
In this experiment, we test the layered design (Section \ref{s:optimization}), which deploys three metadata servers in the CA, VA, and IR regions, and nine data servers in the five regions identified in Table \ref{t:RTT}, and four other regions not shown in the table for brevity. The maximum RTT between a pair of regions is 243ms, and the lowest is 7ms. To see the impact of two-layered design and isolate the interference from concurrency, we deploy one close-loop client in CA and report the p99.99 latency in Figure \ref{fig:layered} with varying number of data servers. Giza \cite{Giza_ATC2017} was optimized for workload without contention, aligning with our setup. 
We persist the data and commands to disk using Redis's append-only file feature with a single thread (we invoke \texttt{fsync} at every query).

The number of data servers equals the replication factor, since each server has one copy of data. Reads are scalable when there is no contention, since they take 1 RTT in contacting the metadata servers and retrieving data from the data servers co-located in the same data center. Writes take 2 RTT in this case. \name{} outperforms Giza because of its smaller fast quorum. 

\begin{figure}[t]
    \centering
	\includegraphics[width=\linewidth]{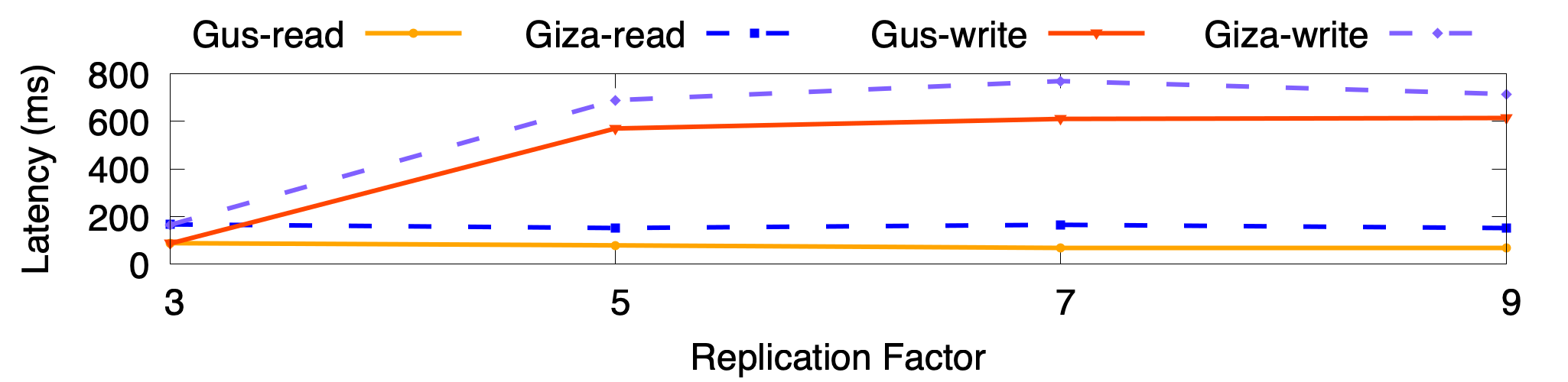}
	\vspace{-10pt}
    \caption{p99.99 Latency vs. Replication Factor (no conflict).}
	\label{fig:layered}
\end{figure}

\vspace{5pt}
\noindent\textbf{Erasure Coding.}~We integrate erasure coding and layered design in \name{} with three data centers (located in CA, VA and IR). 
Figure \ref{fig:EC} reports p50 and p99 read latency. Erasure coding is more cost-effective for larger data in terms of the tradeoff between latency and saving in network bandwidth and storage; hence, we vary data size from 4KB to 4MB. Under the no-contention workload, both systems take 2-RTT (1 RTT to metadata server and 1 RTT to data server). \name{} has better latency due to its smaller fast quorum; its write takes roughly 144ms (2*72), whereas Giza's write takes roughly 223ms (72+151). We observe a similar pattern for write latency. 

\begin{figure}[t]
    \centering
	\includegraphics[width=\linewidth]{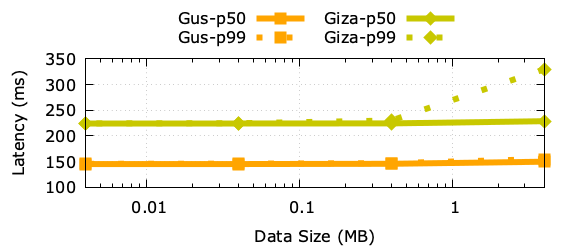}
	\vspace{-10pt}
	\caption{Read Latency vs. Data Size ($n=3$, no conflict). Data size ranges from 4KB to 4MB.}
	\label{fig:EC}
\end{figure}

\subsection{Persistence After Crash}
\label{s:eval-durability}

For persistence, we log state change to an SSD disk before sending  acknowledgement. This experiment uses the same configuration as in Figure \ref{fig-sub:6c}. In EPaxos, nodes log  synchronously to an SSD-backed file, whereas in Gryff and Gus, nodes log their state change only for incoming writes; hence, we only report the latency for writes in Figure \ref{fig:durable}. All the systems are I/O bound, but EPaxos's dependency tracking makes the tail latency increase by  $\sim$600ms, whereas Gryff and \name{} increase by 280-300ms. Even with persistent writes, \name{} still has better tail latency because of its 1-RTT fast path. 

\begin{figure}[t]
	\includegraphics[width=\linewidth]{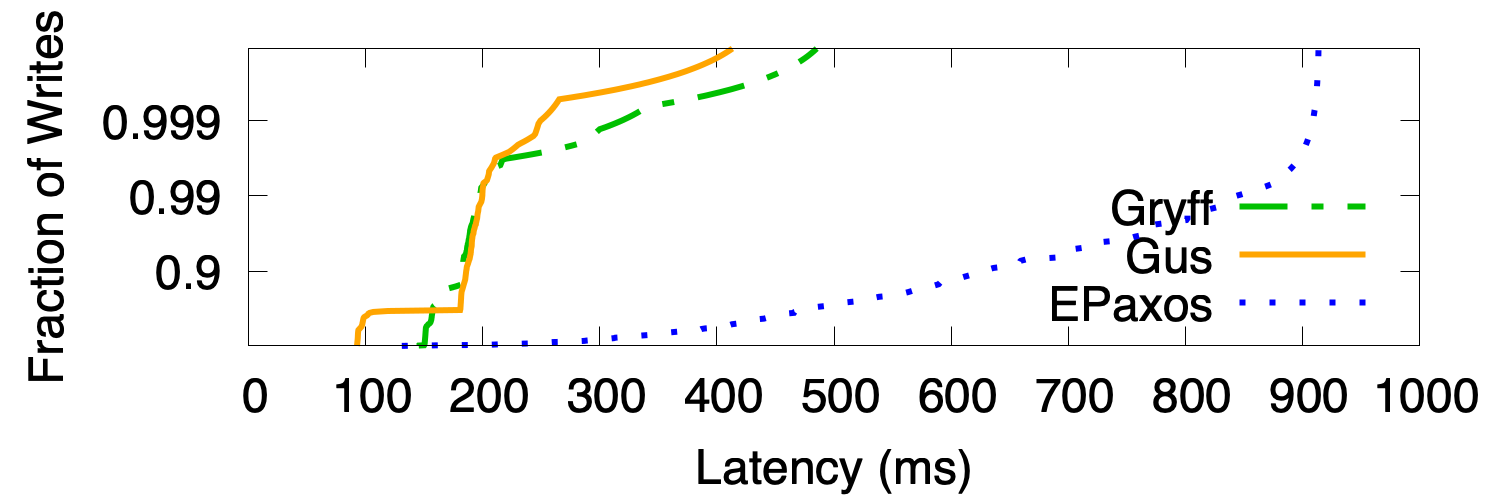}
	\vspace{-20pt}
	\caption{Log-scale Latency CDF with Persistent Writes ($n=3$, 94.5\% reads, 
	25\% conflicts).}
	\label{fig:durable}
\end{figure}

\subsection{Scalability: Relaxed Resilience}
\label{s:scalability}

In Figure \ref{fig:scale}, we compare \name{} against \textsc{Tempo} \cite{Tempo_Alexey_Eurosys21} with $n=5, 7, 9$. Both systems tolerate 2  concurrent failures in all three scenarios. To avoid cluttering the plot, we omit the results of EPaxos, Gryff, Flexible Paxos \cite{FlexiblePaxos_Heidi_OPODIS2016}, and \textsc{Atlas} \cite{Atlas_Sutra_Eurosys20}, because they generally have higher tail latency, as also observed in \cite{Tempo_Alexey_Eurosys21,EPaxos_revisited_NSDI2021}. 

\begin{figure}[t]
	\includegraphics[width=\linewidth]{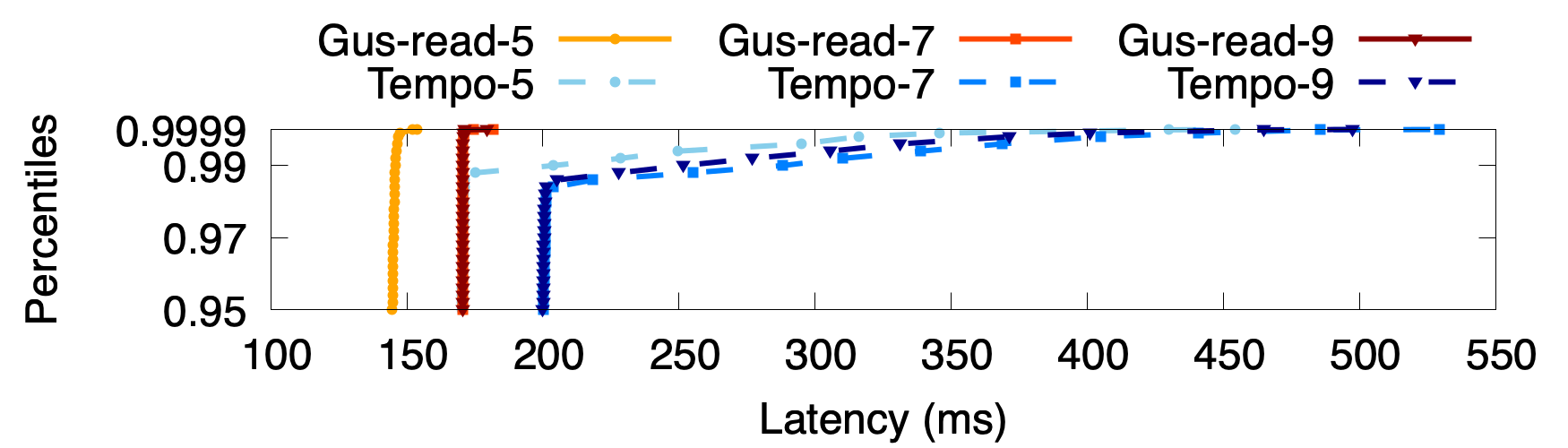}
	\includegraphics[width=\linewidth]{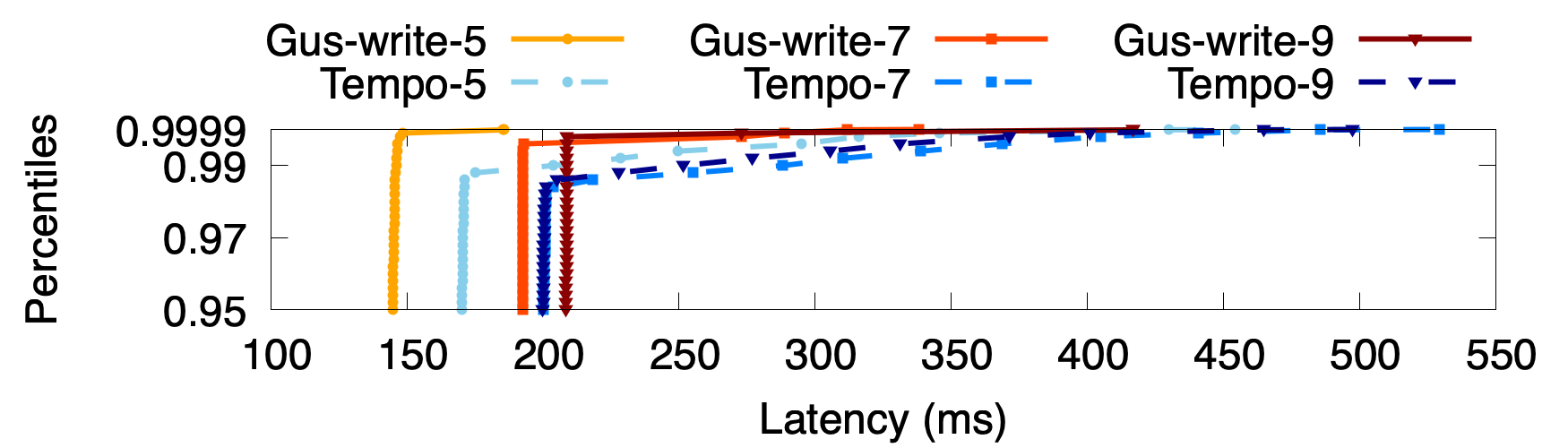}
	\vspace{-20pt}
	\caption{Log-scale Latency CDF with Scalability ($n=5, 7, 9$; 94.5\% reads; 2\% conflicts).}
	\label{fig:scale}
\end{figure}

\name{} has better tail latency for reads because of its smaller read quorum (see Table \ref{t:quorum-size}). For example, when $n=5$, \name{}'s fast path to the closet fast quorum takes 72-145ms and \textsc{Tempo}'s takes 93-162ms. In general, \textsc{Tempo} has better latency for writes when $n=9$, because its quorum is 1 less than \name{}'s write quorum. Occasionally, \textsc{Tempo} needs to wait for timestamps to becomes stable to execute an operation. 
This is mainly the reason that \name{} outperforms \textsc{Tempo} when we consider p99 or above latency for writes. 


\section*{Acknowledgement}
Authors would like to thank the anonymous reviewers for the constructive comments. Authors would also like to thank Vitor Enes for helping the evaluation on \textsc{Tempo}. This material is based upon work partially supported by the National Science Foundation under Grant
CNS-2045976 and CNS-1816487.


\newpage
	
	\bibliographystyle{abbrv}
	\bibliography{references,DS_system,Tseng}
	
	\appendix
	
\newpage

\section{Correctness Proof}
\label{app:proof}

\begin{definition}[Effective Operations]
A read operation is called \textit{effective} if the node invoking the read does not crash while executing it.  A write operation is called \textit{effective} if either the node invoking the write does not crash, or if its value was returned by an effective read.
\end{definition}

\begin{definition}[Committed Write]
A write is committed if a majority of nodes have its value stored in $\storage$.
\end{definition}

An effective write must be committed due to the usage of $staleTag$ and $\commitWrite$ messages. 

\begin{lemma}
\label{lemma:commit-write}
Only the value from a committed write can be returned by a read operation.
\end{lemma}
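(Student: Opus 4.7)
The plan is to peel open Condition \textsc{SafeToRead} and then trace how each piece of information reflected in $\view_i[j]$ got there. A read at $R_i$ returns value $v$ with some tag $t$ only when $(t,v) \in \storage_i$ and there is a majority quorum $Q_R \ni i$ with $t \in \view_i[j]$ for every $j \in Q_R$. The goal is to convert this view-side witness into a storage-side witness: $v$ sits in $\storage_j$ for a majority of nodes, which is exactly the definition of a committed write.

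First I would enumerate the sites where $\view_i[j]$ can acquire a tag. For $j \neq i$, the only such site is the $\updateView$ handler at Line 44, which is triggered by receiving an $\langle \updateView, t \rangle$ message from $R_j$. Inspecting the two emission sites (Lines 26 and 39), both fire strictly after $R_j$ has inserted a pair whose tag is $t$ into $\storage_j$ (Lines 23 and 32--34, respectively). For $j = i$, the requirement $t \in \view_i[i]$ is already subsumed by the hypothesis $(t, v) \in \storage_i$ together with the bookkeeping at Lines 14, 25, or 38. Combining the two cases, every node in $Q_R$ stores some pair whose tag is $t$.

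The remaining step is to argue that this stored pair carries exactly the value $v$, not some other value. This is the point that needs the most care because of speculative timestamps: the same write may temporarily sit in $\tmpStorage_j$ under a stale tag and be moved to $\storage_j$ with its final tag only on a $\commitWrite$, and we need every node in $Q_R$ to agree on the value bound to $t$. I would handle this via the invariant that a tag $(ts, id)$ uniquely identifies a single write operation: the writer-id field pins the writer, and each writer increments its timestamp monotonically per operation, so two distinct writes cannot share a tag. Hence the value that $R_j$ ever stores with tag $t$ is necessarily $v$.

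With that invariant in hand, every node in the majority $Q_R$ has $(t, v) \in \storage_j$, so the write that produced $v$ is committed per the \textit{Committed Write} definition, proving the lemma. The main obstacle is not the quorum-counting step but the uniqueness-of-binding invariant — this is precisely the Technical Challenge 2 that Condition \textsc{SafeToRead} was designed to absorb, so I would either quote a companion invariant established earlier in the appendix or peel off a small preparatory lemma that formalizes ``tag $t$ is bound to at most one value across all nodes and all time'' before invoking it here.
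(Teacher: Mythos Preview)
Your overall strategy matches the paper's one-sentence proof (committed-write definition, the semantics of $\view$, and the requirement that a read quorum has received the same value--tag pair), and you unpack it in considerably more detail than the paper does. The extra detail is welcome, but one step in your case analysis is incorrect.

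You claim that for $j\neq i$ the \emph{only} place $\view_i[j]$ can acquire a tag is the $\langle\updateView\rangle$ handler at Line~44. That is false: Lines~25 and~38 also insert a tag into $\view_i[j]$ when $R_i$ processes a $\langle\writeData\rangle$ or $\langle\commitWrite\rangle$ message directly from writer~$j$ (co-located with $R_j$, possibly $j\neq i$). In those two branches your implication ``$t\in\view_i[j]\Rightarrow (t,\cdot)\in\storage_j$'' does not follow from the lines you cite: at the instant $R_i$ executes Line~25 or~38, writer~$j$ has only \emph{sent} the tag; $R_j$'s server thread may not yet have handled the self-addressed $\langle\writeData\rangle$, and the writer-side update to $\storage_j$ at Line~13 happens only after the operation commits. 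So the conclusion ``every $j\in Q_R$ has $(t,v)\in\storage_j$'' is unjustified for this sub-case, and with it the majority count.

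To repair the argument you need to cover the Line~25/38 branch separately---for instance, by observing that tags inserted there always carry writer-id~$j$, so the problematic case is exactly $t.id=j\in Q_R$, and then arguing either that $R_j$'s local handler has (or will have, before any read can return) stored the pair, or that the remaining members of $Q_R$ already furnish the majority via the Line~44 path. The paper's own proof is too terse to surface this sub-case at all; your instinct to enumerate update sites is the right one, but the enumeration must be complete.
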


\begin{proof}
This follows from the definition of a committed write, the usage of $\view$, and the design that a read operation requires a quorum received the same value-tag pair.
\end{proof}

\begin{definition}[Tag for Committed Write]
The tag for a committed write operation is the tag $( \currentTime_i, i)$ associated with the value it writes, where $\currentTime_i$ is defined at Line 3 for writes that take the fast path and Line 10 for writes that take the slow path.
\end{definition}

\begin{lemma}
Tag for any committed write is unique.
\end{lemma}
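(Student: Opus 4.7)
The plan is to split on whether the two committed writes come from the same writer or from different writers, and handle each case using the structure of the tag $(ts, id)$ together with the monotonicity of each writer's local state.

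First I would observe that by Definition~1, two tags are equal only when both their $ts$ and $id$ fields agree. Since the $id$ field of the tag associated with a committed write is the writer's identifier (Lines 3 and 10, together with the assignment at Line 12), any two committed writes performed by \emph{distinct} writers already have distinct tags. This disposes of the cross-writer case without any protocol-specific argument.

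The remaining case is two committed writes $w$ and $w'$ invoked by the same writer $i$. Writer $i$ executes its client thread sequentially, so without loss of generality $w$ completes before $w'$ is invoked. The key invariant I would establish is that $tag_i.ts$ is monotonically non-decreasing throughout the execution: the background handlers at Lines~24 and 37 only overwrite $tag_i$ when the incoming tag is strictly larger, and the writer's own update at Line~12 sets $tag_i.time \gets ts^{max}_i$, where $ts^{max}_i$ equals either $tag_i.ts + 1$ (fast path) or $\max_{j \in Q}\, tag_j.ts + 1$ (slow path), both of which are at least the value of $tag_i.ts$ read at Line~3. Consequently, once $w$ commits and its bookkeeping executes, $tag_i.ts \geq ts^{max}_i(w)$. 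When $w'$ later reads $tag_i.ts$ at Line~3, it obtains a value at least $ts^{max}_i(w)$, so $ts^{max}_i(w') \geq ts^{max}_i(w) + 1 > ts^{max}_i(w)$. Since by the definition of the tag for a committed write these are exactly the $ts$ fields of the two tags, the tags differ in their $ts$ field.

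I expect the main obstacle to be pinning down the monotonicity claim in the presence of concurrent background handlers and the slow-path update. In particular, I would need to argue carefully that the writer's bookkeeping at Line~12 happens before the client issues the next write (so that $tag_i.ts$ has already absorbed $ts^{max}_i(w)$ at the time Line~3 executes for $w'$), and that no handler can ever decrease $tag_i.ts$ in between. Both facts follow from the code---the writer thread assigns $tag_i$ only via $\max$-like updates, and every handler guards its overwrite with an inequality check---but they are what makes the argument go through, so the proof should state them as explicit invariants before concluding.
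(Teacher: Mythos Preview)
You have misread what the lemma asserts. In the paper, the very next lemma (``Unique Write Tag'') is the one that says two \emph{different} effective writes receive different tags; the present lemma is not about injectivity across writes but about well-definedness for a \emph{single} write. Because of the speculative timestamp, a write $w$ first propagates a tentative tag at Line~3 and may later replace it at Line~10 on the slow path. The danger is that $w$'s value could be stored in $\storage$ at a majority under the \emph{stale} tag while $w$ ultimately commits with the \emph{updated} tag, so that two distinct tags are associated with the same committed write (Technical Challenge~2 in Section~\ref{s:protocol}). The lemma rules this out.

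The paper's argument is a quorum-intersection argument specific to $n=3$: while the writer is still in flight, its own $\storage_i$ has not yet been updated (Line~13 runs only after the path is decided), so only the two other nodes can hold $w$'s value in $\storage$. Either both accept the speculative tag---then the writer sees no conflict and stays on the fast path, so the tag is never changed---or at least one of them has a larger $tag$ and shunts the value to $\tmpStorage$ (Line~28), in which case at most one node has the stale-tagged value in $\storage$, which is not a majority, and by Condition \textsc{SafeToRead} no read can return it. Hence the write is committed only under the final tag.

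Your argument, by contrast, proves that two distinct committed writes get distinct tags. That is a correct statement, but it is the content of the \emph{next} lemma, not this one, and it does not address the possibility that a single slow-path write leaves behind a quorum holding its value under the obsolete speculative tag.
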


\begin{proof}
This is because of quorum intersection. During the process when the writer node is still completing the write operation $w$, only two other nodes may serve a read operation. That is, either both of these nodes see no conflict (so the tag won't be updated), or one of the nodes sees conflict, then a read won't return the value in $w$.
\end{proof}

\begin{definition}[Tag for Write]
The tag for a write operation is the tag $( \currentTime_i, i)$ associated with the value it returns, $tag$ at Line 17.
\end{definition}

\begin{lemma}[Unique Write Tag]
\label{lemma:unique-write-tag}
Let $w_1$ and $w_2$ be two effective write operations with tag $(time_1, id_1)$ and $(time_2, id_2)$, respectively. If $w_1 \neq w_2$, then $(time_1, id_1) \neq (time_2, id_2)$.
\end{lemma}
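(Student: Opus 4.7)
The plan is to split into two cases based on the writer identifiers. Since the tag is the pair $(time, id)$, if $id_1 \neq id_2$ the two tags already differ in their second coordinate and we are done immediately. The only nontrivial case is $id_1 = id_2 = i$, meaning both writes come from the same writer thread at node $R_i$. A single writer thread issues writes sequentially, so we may assume $w_1$ is issued and its local bookkeeping (Lines~12--15) completes before $w_2$ begins executing Line~3.

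For the sequential case I would establish the following monotonicity invariant: just before $w_2$ executes Line~3, $tag_i.ts \geq time_1$. This holds because, regardless of whether $w_1$ took the fast path or the slow path, Line~12 sets $tag_i.time \gets ts_i^{max}$ using precisely the committed value $time_1$, and no subsequent background-handler event can decrease $tag_i$ (Lines~24 and~37 only raise it). Therefore $w_2$'s speculative timestamp at Line~3 satisfies $ts_i^{max} \geq time_1 + 1$. If $w_2$ commits on the fast path, $time_2 = ts_i^{max} > time_1$ and we are done. If $w_2$ falls through to the slow path, it enters Line~9 because some $tag_j$ returned in the first phase satisfies $tag_j \geq (ts_i^{max}, i)$ in the lexicographic order; a short case analysis on whether $tag_j.ts > ts_i^{max}$ or $tag_j.ts = ts_i^{max}$ with $tag_j.id \geq i$ shows $\max_{j \in Q} tag_j.ts \geq ts_i^{max}$, so the reassigned $ts_i^{max}$ at Line~9 is strictly greater than its old value, and in particular $> time_1$.

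The main obstacle I anticipate is the interaction between the definition of ``effective write'' and the definition of ``tag of a write.'' An effective write need not have a non-crashed writer: it may be effective solely because an effective read returned its value. To handle this, I would observe that for such a write the tag is witnessed by the $\writeData$ or $\commitWrite$ message that caused the value to be stored at a quorum, and that the writer must have executed Line~3 (and possibly Line~9) before the crash, so the very same monotonicity argument on $tag_i$ at the time Line~3 of $w_2$ runs still applies. A minor technical subtlety to state carefully is that the bookkeeping on Lines~12--15 and the background-handler update triggered by the writer's own $\writeData$ message both touch $tag_i$, but since both only ever monotonically raise it, concurrency between them is harmless for the invariant.
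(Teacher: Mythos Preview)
Your approach is essentially the paper's: the same case split on writer identifiers, with the same-writer case handled by monotonicity of the locally generated timestamp. The paper's entire argument for the same-writer case is the single clause ``$time_1 \neq time_2$ because of line~2,'' so your explicit handling of the slow-path reassignment at Line~9 and of the crashed-but-effective-writer subtlety goes well beyond what the paper spells out; one small simplification you could take is that when $id_1 = id_2$ both writes come from the single sequential writer thread at $R_i$, so if $w_2$ is ever invoked the thread necessarily completed $w_1$ through Line~15, and the crashed-writer concern can only apply to $w_2$ (which your monotonicity argument already covers).
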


\begin{proof}
If $id_1 = id_2$, then $time_1 \neq time_2$ because of line 2. Otherwise, these two tags won't be the same.
\end{proof}

\begin{lemma}[Progress of Tag]
\label{lemma:op1-before-op2}
Let $op_1$ and $op_2$ be two effective operations with tag $(time_1, id_1)$ and $(time_2, id_2)$, respectively. Suppose $op_1$ terminates before $op_2$ starts. Then we have:

\begin{itemize}
    \item If $op_1$ is a read or a write operation and $op_2$ is a read operation, then $(time_1, id_1) \leq (time_2, id_2)$.
    
    \item If $op_1$ is a read or a write operation and $op_2$ is a write operation, then $(time_1, id_1) < (time_2, id_2)$.
\end{itemize}
\end{lemma}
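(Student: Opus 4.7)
The plan is to reduce the lemma to a single monotonicity invariant on local tags, then apply quorum intersection and do a case analysis on $op_2$. The invariant I would prove first states: if an effective operation $op$ terminates with tag $(time, id)$, then at that instant some quorum $Q_{op}$ satisfies $tag_j \ge (time, id)$ for every $j \in Q_{op}$, and this property persists forever because no handler in Algorithm~\ref{algo:Gus} ever decreases a local tag (by straightforward inspection of Lines 24 and 36--37, the only sites that modify $tag_j$ at a server).

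For an effective committed write with tag $(time, id)$, the invariant follows from the fast/slow paths: the responding quorum ensured $tag_j \ge (time, id)$ before returning its acknowledgement, either by raising $tag_j$ at Line 24 before $\ackWrite$ (fast path, when the writer's speculative tag was strictly the new winner at every responder) or by raising $tag_j$ at Lines 36--37 before $\ackCommit$ (slow path). For an effective read returning $(time, id)$, Condition \textsc{SafeToRead} supplies a quorum $Q_R$ with $(time, id) \in \view_i[j]$ for every $j \in Q_R$; by inspection, an entry $(t,k)$ enters $\view_j[j]$ at node $j$ only when $j$ itself installs $(t,k)$ as its own local tag (Lines 25 and 38), so $tag_j \ge (time, id)$.

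Given the invariant, $op_2$'s first-phase response quorum $Q_2$ must intersect $Q_1$ (both are simple majorities), and any $k \in Q_1 \cap Q_2$ replies with $tag_k \ge (time_1, id_1)$. If $op_2$ is a read, $tag^{max} \ge tag_k \ge (time_1, id_1)$ and Condition \textsc{SafeToRead} forces the returned value's tag to dominate $tag^{max}$, giving $(time_2, id_2) \ge (time_1, id_1)$. If $op_2$ is a write, either it takes the fast path---in which case $(time_2, id_2) = (ts_2^{max}, id_2) > tag_k \ge (time_1, id_1)$ by the branch condition at Line 6---or the slow path, in which case $time_2 = \max_{j \in Q_2} tag_j.ts + 1 \ge tag_k.ts + 1 > time_1$, using that $tag_k \ge (time_1, id_1)$ implies $tag_k.ts \ge time_1$ by lexicographic order. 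Either way $(time_2, id_2) > (time_1, id_1)$.

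The main obstacle will be the read subcase of the invariant. Unlike a write, an effective read does not itself push its tag into remote state, so the only handle on ``a quorum has installed the tag'' is the $\view$ structure populated via $\updateView$ messages. One has to explicitly argue that entries of the form $(t, j) \in \view_j[j]$ arise at $j$ only after $j$ raises its own $tag_j$ to at least $t$---a small auxiliary invariant about the algorithm's handlers that must be stated and justified before the quorum-intersection step can be invoked. Once that bridge is in place, the remaining case analysis is routine inspection of Lines 6--10 together with Lemma~\ref{lemma:unique-write-tag}.
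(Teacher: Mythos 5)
Your overall route is the same as the paper's, just made explicit: the paper's entire proof is ``the first item is by quorum intersection, the second by Lines 2 and 9--10,'' and your monotonicity-of-$\currentTag_j$ invariant plus the intersection of $op_1$'s quorum with $op_2$'s first-phase quorum, followed by the fast/slow-path case split at Lines 6--10, is exactly that argument fleshed out. The write case and the derivation of the conclusions from the invariant are fine (including the lexicographic step $tag_k \ge (time_1,id_1) \Rightarrow tag_k.ts \ge time_1$ on the slow path).

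The soft spot is the one you yourself flag, and your proposed bridge does not quite close it. You argue that entries enter $\view_j[j]$ only when node $j$ installs the tag locally; that statement is true, but Condition \textsc{SafeToRead} is evaluated on the \emph{reader's} structure $\view_i[j]$, not on $\view_j[j]$. Entries appear in $\view_i[j]$ in two ways: via an $\updateView$ message from $j$ (in which case $j$ had already raised $\currentTag_j$ before sending, Lines 24--26 and 36--39, and your argument goes through), but also directly at Lines 25 and 38, when node $i$ itself receives a $\writeData$ or $\commitWrite$ from writer $j$ and optimistically records $j$'s tag under $\view_i[j]$ --- at that moment the server co-located with writer $j$ need not have installed the tag at all. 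So from ``$(time_1,id_1)\in\view_i[j]$ for all $j\in Q_R$'' you can only conclude $\currentTag_j \ge (time_1,id_1)$ for every $j\in Q_R$ \emph{except possibly the node whose identifier equals the tag's writer field}; the read quorum you exhibit may therefore contain one node that has not yet caught up, and the persistent quorum in your invariant is not immediate. Closing this requires an extra argument about that single exceptional node (e.g., that the in-flight write by the co-located writer will itself install a tag at least as large before any conflicting quorum can be assembled, which is where the restriction to $n\le 5$ and Lemma~\ref{lemma:unique-write-tag} enter); the paper's one-line appeal to quorum intersection silently glosses over precisely this point, so you should state and prove the bridge for $\view_i[j]$, not $\view_j[j]$.
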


\begin{proof}
First one is because of quorum intersection.

Second one is because of line 2 and line 10.
\end{proof}

\begin{lemma}[Safety]
There is a total order $S$ on all the effective operations such that (i) $S$ respects the real-time occurrence order for the effective operations; and (ii) any effective read operation obtains the value written by the last write operation that precedes it in $S$.
\end{lemma}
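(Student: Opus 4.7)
The plan is to follow the standard ABD-style argument tailored to \name{}, building the total order $S$ directly from the tags assigned to effective operations. Concretely, I will order effective operations by their associated tags (as defined in the preceding two definitions), breaking ties by placing an effective read immediately after the effective write whose value it returns. By Lemma~\ref{lemma:unique-write-tag}, distinct effective writes carry distinct tags, so the only ties to resolve are between a write and the reads that return its value; within such a tie class, reads are further ordered by their real-time start time. This yields a single total order $S$.

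\textbf{Verifying the two conditions.} For condition~(i), I would take any two effective operations $op_1,op_2$ with $op_1$ completing before $op_2$ is invoked, and split into the four read/write cases. The read-after-read, read-after-write, write-after-write, and write-after-read cases all follow directly from Lemma~\ref{lemma:op1-before-op2}: when $op_2$ is a write the strict inequality on tags places $op_2$ strictly after $op_1$ in $S$; when $op_2$ is a read and $op_1$ is a write with the same tag, the tie-break rule places $op_2$ after $op_1$; and when both are reads with equal tag returning the same committed write's value, the real-time secondary order handles it. For condition~(ii), I would appeal to Lemma~\ref{lemma:commit-write} to identify, for each effective read $r$ with tag $t_r$, a unique committed write $w_r$ whose tag equals $t_r$; by construction $r$ immediately follows $w_r$ in $S$, and no other effective write sits between them because (a) writes with smaller tags precede $w_r$ in $S$ and (b) writes with larger tags follow $r$ in $S$ (again by Lemma~\ref{lemma:op1-before-op2} and the tag-ordering definition of $S$).

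\textbf{Main obstacle.} The delicate step, and the place where \textit{Technical Challenge 2} bites, is verifying that the read-to-write matching used in the tie-break is well-defined: I must show that if an effective read returns $value$ with tag $t_r$, then there is exactly one effective write that is ``associated'' with $t_r$, i.e., no write ends up carrying two tags in $S$. This is precisely what Condition \textsc{SafeToRead} was designed to enforce. My plan here is to argue by contradiction: assume some effective write $w$ has both its speculative tag $t^{\mathit{old}}$ and its committed tag $t^{\mathit{new}}$ returned by effective reads. Then there must exist a read quorum $Q_R$ all of whose entries in $\view$ contain $(t^{\mathit{old}},\mathit{id}(w))$; combined with the quorum that witnessed $w$'s slow-path commit and the fact that any node that observed the stale $t^{\mathit{old}}$ ackknowledged it only via $\tmpStorage$, a counting argument on quorum intersections (using $n \le 5$, as discussed after Definition~\ref{def:safetoread}) yields the contradiction. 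Once this uniqueness is established, the Progress-of-Tag lemma plus the tie-break rule mechanically deliver both (i) and (ii), and the total order $S$ is well-defined and correct.
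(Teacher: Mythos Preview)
Your proposal is correct and follows essentially the same route as the paper: construct $S$ by ordering effective operations by tag, break write--read ties by placing the write first, break read--read ties by real-time start, and then invoke Lemma~\ref{lemma:unique-write-tag} and Lemma~\ref{lemma:op1-before-op2} to verify conditions (i) and (ii). Your ``main obstacle'' discussion is more explicit than the paper's treatment, but it is exactly the content of the earlier (unnamed) lemma that a committed write's tag is unique, which the paper establishes separately rather than inside the Safety proof.
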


\begin{proof}
Consider only the effective operations. Define the total order $S$ as follows:

\begin{itemize}
    \item Order effective operations according to their tags.
    
    \item If a read and a write have the same tag, the write is ordered in $S$ before the read.
    
    \item If two reads have the same timestamp, the one that starts first is ordered in $S$ before the other one. 
\end{itemize}
Note that by Lemma \ref{lemma:unique-write-tag}, all the writes are totally ordered by their tag. It follows that if two operations have the same tag, one of them is necessarily a read operation.

Given this ordering $S$, we show that is is a linearization of the execution.

\begin{itemize}
    \item Let $op_1$ and $op_2$ be two operations with tag $(time_1, id_1)$ and $(time_2, id_2)$, respectively. Suppose $op_1$ terminates before $op_2$ starts. By Lemma \ref{lemma:op1-before-op2}, we have $(time_1, id_1) \leq (time_2, id_2)$ if $op_2$ is a read operation and $(time_1, id_1) < (time_2, id_2)$ if $op_2$ is a write operation. By the construction of $S$, $op_1$ is ordered before $op_2$.
    
    \item Let $read$ be a read operation that returns a value $value$ with the tag $(time, id)$. By construction of the algorithm, the value is written by node $j$ after it has computed $time$. 
\end{itemize}
\end{proof}
	
\end{document}
\endinput